\documentclass{article}

\usepackage[utf8]{inputenc}
\usepackage[T1]{fontenc}
\usepackage{amsmath,amssymb,amsfonts}
\usepackage{mathtools}
\usepackage{amsthm}
\usepackage{times}
\usepackage{enumitem}
\usepackage{xcolor}
\usepackage{hyperref}
\usepackage[margin=0.7in]{geometry}
\usepackage{tikz}
\usetikzlibrary{arrows.meta, positioning, calc,fit}

\newtheorem{theorem}{Theorem}
\newtheorem{lemma}{Lemma}

\newtheorem{corollary}{Corollary}
\newtheorem{assumption}{Assumption}
\newtheorem{definition}{Definition}
\newtheorem{remark}{Remark}

\newcommand{\R}[1][\empty]{\mathbb{R}^{#1}}
\newcommand{\Rp}[1][\empty]{\R[#1]_{\ge 0}}
\newcommand{\eps}{\varepsilon}

\newcommand{\xsas}{\overline x}
\newcommand{\ysas}{\overline y}

\allowdisplaybreaks

\newcommand{\xsol}{\mathbf{x}}

\title{On input-output persistency and the interconnection of positive nonlinear systems}

\author{Arthur Castello B. de Oliveira, 
Moh K. Wafi
, and Eduardo D. Sontag}

\date{}

\begin{document}

\maketitle

\begin{abstract}
We study feedback interconnections of positive nonlinear SISO systems and introduce two complementary properties: persistent-input/persistent-output (PIPO) for the plant and persistent-input/transient-output (PITO) for the controller. Assuming forward completeness of the closed loop and a one-sided affine growth bound on the controller output, we show that PIPO and PITO jointly imply boundedness of the control signal. The result is input-output in nature and does not require linearity or monotonicity of the interconnected subsystems, although it does not in general guarantee boundedness of the full controller state. We provide a structural PIPO condition for positive monotone plants with a class-\(\mathcal K_\infty\) steady-state characteristic, establish PITO for the antithetic integral controller with explicit gains, and illustrate the framework on a nonlinear integral-feedback motif with multiplicative controller growth.
\end{abstract}

\section{Introduction}
\label{sec:introduction}

Integral feedback is a canonical mechanism for robust regulation against
constant references and disturbances. When the closed loop is stable, the
integrator supplies the internal model required for asymptotic tracking and
disturbance rejection. This principle is fundamental in classical control
theory and has also become an important organizing concept in biological
regulation~\cite{rev_internal_model_2022,18cdc_tutorial_imp}. Integral action,
however, may interact poorly with hard constraints. In engineering systems,
actuator saturation may prevent the controller from applying the accumulated
control action while the integral state continues to grow, producing the
phenomenon known as windup. The analysis and compensation of windup have
therefore received considerable attention in the control
literature~\cite{galeani2009tutorial}. Biomolecular control systems face a
different but related constraint: concentrations and reaction rates must
remain nonnegative. Although positivity is not equivalent to actuator
saturation, it imposes a hard feasibility constraint and raises a similar
basic question: under what conditions do the signals generated by an integral
feedback interconnection remain bounded?

This question is particularly relevant for the antithetic integral controller
(AIC), introduced by Briat, Gupta, and Khammash in~\cite{AFC2016}. The AIC
implements integral action through an annihilation reaction between two
molecular species and has become a prominent feedback architecture in
synthetic biology. Under the assumptions considered in~\cite{Khammash2019},
the antithetic motif represents a fundamental biomolecular controller topology
capable of achieving robust perfect adaptation. The architecture has also been
realized experimentally. An in vivo implementation based on a
$\sigma^W$--RsiW regulatory pair was reported in~\cite{Khammash2019}, while
an in vitro controller based on $\sigma^{28}$ sequestration was used to
regulate gene expression in an \emph{E.\ coli} cell-free
transcription--translation system~\cite{agarwal2019naturecom,agarwal_cdc2019}.
Related quasi-integral, exponential, and logistic positive controllers have
also been proposed and analyzed~\cite{Huang2018,corentin_2020}.

The dynamics of antithetic feedback systems can be substantially more
complicated than their integral-control interpretation might suggest. For the
standard four-dimensional interconnection obtained by coupling an AIC to a
two-dimensional linear plant, the closed loop has a unique equilibrium that
may be unstable for some parameter values. General results for strongly
$2$-cooperative systems show that bounded trajectories whose omega-limit sets
avoid the equilibria exhibit a Poincar\'e--Bendixson-type behavior and converge
to periodic orbits~\cite{Eyal_k_posi,katz2025instability}. The corresponding
structure of compact omega-limit sets for the antithetic feedback system was
studied specifically in
\cite{2019biorxiv_margaliot_sontag,Margaliot-CDC}. These results leave
boundedness as an essential hypothesis: before the asymptotic structure can be
used, one must first exclude trajectories that escape to infinity.

This boundedness question was addressed in
\cite{wafi-boundedness-antithetic}, where all trajectories of the antithetic
feedback system considered there were shown to be bounded. The proof did not
rely on a Lyapunov function, and instead used a time-domain argument tailored
to the antithetic interconnection: if the controller output remains large for
long enough, the plant eventually generates a sufficiently large feedback
signal, which in turn forces the controller output to decrease. This suggests
that the relevant mechanism is not exclusively tied to the algebraic form of
the antithetic reactions. Rather, it depends on two complementary qualitative
input--output behaviors of the plant and controller.

Motivated by this observation, we introduce two properties for positive
nonlinear systems. The first is a
\emph{persistent-input/persistent-output} (PIPO) property for the plant. It
requires that, for every prescribed output level $\kappa$, an input that
remains above a suitable threshold $U(\kappa)$ eventually forces the plant
output to remain above $\kappa$, uniformly over the initial plant state. The
second is a \emph{persistent-input/transient-output} (PITO) property for the
controller. It requires that, for every tolerance $\varepsilon>0$, a
controller input that remains above a suitable threshold $V(\varepsilon)$
eventually forces the controller output below $\varepsilon$. The corresponding
transient time is required to depend on the initial controller state only
through its initial output, uniformly over internal states having the same
output.

These properties differ in direction from familiar bounded-input or
upper-gain stability notions. PIPO provides a lower bound on the eventual
output produced by a persistently large input, whereas PITO describes
attenuation of the output under a persistent lower bound on the input. Neither
property, by itself, asserts stability of the subsystem or boundedness under
arbitrary inputs. Their role is instead complementary: after the two systems
are interconnected, the plant converts a persistently large control signal
into a persistently large measurement, while the controller converts that
measurement into a sufficiently small control signal. The resulting
contradiction prevents the control signal from remaining arbitrarily large.

Our main result formalizes this mechanism. We consider a feedback
interconnection of positive nonlinear SISO systems in which the plant input is
the controller output and the controller input is the plant output. Assuming
that the closed-loop interconnection is forward complete and that the
controller output satisfies a one-sided affine growth bound, we prove that a
PIPO plant interconnected with a PITO controller generates a bounded control
signal (Theorem~\ref{thm:no_windup}). The theorem does not require linearity or
monotonicity of either subsystem once the PIPO and PITO properties have been
established. It should, however, be interpreted as a controller-output
boundedness result: without additional assumptions, it does not guarantee
boundedness of the complete controller state. If the plant is also
bounded-input bounded-state stable, boundedness of the plant state follows as
a direct consequence (Corollary~\ref{cor:PIPOTO_BIBS}).

We next provide conditions and examples showing how the two properties can be
verified. On the plant side, we prove that a positive monotone system with a
globally asymptotically stable equilibrium for every constant input and a
class-$\mathcal K_\infty$ input--output characteristic is PIPO and BIBS
stable~\cite{mcs_angeli_2003,AngeliS2004}. This gives a structural sufficient
condition for PIPO; an explicit transient-time bound additionally requires
quantitative information about convergence to the input-state
characteristic. On the controller side, we establish positivity and forward
completeness of the AIC and derive explicit PITO thresholds and transient
times. The main theorem then guarantees boundedness of the AIC output when it
is connected to any positive PIPO plant, provided that the closed loop is
forward complete. Boundedness of the second antithetic species does not follow
from the general input--output argument and may require additional
model-dependent structure.

Finally, we revisit the ``nonlinear II'' integral-feedback motif studied
in~\cite{shoval2011}. For a fixed positive external stimulus, we decompose the
model into a scalar controller that becomes an integral controller in
logarithmic coordinates and a stable linear plant. We verify PITO for the
controller and PIPO for the plant using explicit formulas. This example
illustrates the branch of the main theorem in which the controller output has
a multiplicative one-sided growth bound, complementing the additive growth
bound furnished by the antithetic controller.

The remainder of the paper is organized as follows.
Section~\ref{sec:prelim_def} introduces the positive plant--controller
interconnection and the standing assumptions.
Section~\ref{sec:bndnss_cs} defines the PIPO and PITO properties and proves the
bounded-control-signal theorem.
Section~\ref{sec:exmpl} develops the monotone-plant condition and analyzes the
antithetic controller and nonlinear-II motif.
Section~\ref{sec:conclusion} summarizes the results and discusses the
limitations of the framework.

\section{Preliminary Definitions}
\label{sec:prelim_def}

We use \(\Rp:=[0,\infty)\) and the standard partial order on \(\mathbb R^n\)
induced by the nonnegative orthant. Thus, for \(x,\tilde x\in\mathbb R^n\), we
write \(x\le \tilde x\) if \(\tilde x-x\in\mathbb R^n_{\ge0}\). Throughout the
paper, \(\|\cdot\|\) denotes the Euclidean norm.

A system
\[
    \dot x=f(x,u),\qquad y=h(x),
\]
with state \(x\in\mathbb R^n_{\ge0}\), input \(u\in\mathbb R_{\ge0}\), and
output \(y\in\mathbb R_{\ge0}\), is said to be \emph{positive} if, for every
locally essentially bounded input \(u:\mathbb R_{\ge0}\to\mathbb R_{\ge0}\)
and every initial condition \(x_0\in\mathbb R^n_{\ge0}\), the corresponding
solution satisfies
\[
    x(t)\in\mathbb R^n_{\ge0},
    \qquad
    y(t)=h(x(t))\in\mathbb R_{\ge0}
\]
for all times in its interval of existence.

The system is said to be \emph{forward complete} if, for every locally
essentially bounded input \(u:\mathbb R_{\ge0}\to\mathbb R_{\ge0}\) and every
initial condition \(x_0\in\mathbb R^n_{\ge0}\), the corresponding solution
exists for all \(t\ge0\).

The system is said to be \emph{bounded-input bounded-state (BIBS) stable} if,
for every initial condition \(x_0\in\mathbb R^n_{\ge0}\) and every constant
\(M_u>0\), there exists \(M_x=M_x(x_0,M_u)>0\) such that, for every locally
essentially bounded input \(u:\mathbb R_{\ge0}\to\mathbb R_{\ge0}\) satisfying
\[
    0\le u(t)\le M_u
    \qquad
    \forall t\ge0,
\]
the corresponding solution satisfies
\[
    \|x(t)\|\le M_x
    \qquad
    \forall t\ge0.
\]

Finally, a positive system is said to be \emph{monotone} if the output map
\(h\) is order preserving and, whenever
\[
    x_0\le \tilde x_0,
    \qquad
    u(t)\le \tilde u(t)
    \quad \forall t\ge0,
\]
one has
\[
    x(t,x_0,u)\le x(t,\tilde x_0,\tilde u)
    \qquad
    \forall t\ge0.
\]
Monotonicity will not be assumed in the general results; it will only be used
later as a sufficient condition for the PIPO property.

\subsection{Problem Setup}

\begin{figure}[t]
\centering
\begin{tikzpicture}[
    auto,
    >=Latex,
    block/.style={
        draw,
        rectangle,
        rounded corners,
        minimum height=1.1cm,
        minimum width=2cm,
        align=center
    },
    controllerblock/.style={
        draw,
        rectangle,
        rounded corners,
        minimum height=2.4cm,
        minimum width=4.8cm,
        align=center
    },
    sum/.style={
        draw,
        circle,
        inner sep=1pt,
        minimum size=5mm
    },
    dashedblock/.style={
        draw,
        dashed,
        rounded corners,
        inner sep=12pt
    }
]

\node[block] (controller) at (0,0)
{
$C$\\[1mm]
$\dot z=g(z,v)$\\
$w=k(z)$
};

\node[block] (plant) at ($(controller.north)+(0,1.8)$)
{
$P$\\[1mm]
$\dot x=f(x,u)$\\
$y=h(x)$
};

\node (yout) at ($(plant.east)+(2.0,0)$) {};
\node (wout) at ($(controller.west)-(2.0,0)$) {};
\coordinate (ysplit) at ($(plant.east)+(1.1,0)$);
\coordinate (wsplit) at ($(controller.west)-(1.1,0)$);

\draw[-] (controller.west) -- (wsplit);
\fill (wsplit) circle (1.2pt);
\draw[->] (wsplit) -- node[above] {$w$} (wout);
\draw[->] (wsplit) |- node[above right] {$u$} (plant.west);

\draw[-] (plant.east) -- (ysplit);
\fill (ysplit) circle (1.2pt);
\draw[->] (ysplit) -- node[above] {$y$} (yout);

\draw[->] (ysplit) |- node[above left] {$v$} (controller.east);

\end{tikzpicture}

\caption{Closed-loop interconnection between the plant \(P\) and the controller \(C\), with \(u=w\) and \(v=y\).}
\label{fig:plant-controller-interconnection}
\end{figure}

We study the interconnection of two positive control systems
\begin{subequations}
\label{eq:system-def}
\begin{align}
P:\quad
&\begin{cases}
    \dot x=f(x,u),\\
    y=h(x),\\
    x(0)=x_0\in\mathbb R^n_{\ge0},
\end{cases}
\label{eq:x-subsystem}\\
C:\quad
&\begin{cases}
    \dot z=g(z,v),\\
    w=k(z),\\
    z(0)=z_0\in\mathbb R^m_{\ge0},
\end{cases}
\label{eq:z-subsystem}
\end{align}
\end{subequations}
where
\[
    f:\mathbb R^n_{\ge0}\times\mathbb R_{\ge0}\to\mathbb R^n,
    \qquad
    g:\mathbb R^m_{\ge0}\times\mathbb R_{\ge0}\to\mathbb R^m,
\]
and
\[
    h:\mathbb R^n_{\ge0}\to\mathbb R_{\ge0},
    \qquad
    k:\mathbb R^m_{\ge0}\to\mathbb R_{\ge0}
\]
are locally Lipschitz. The interconnection is
\[
    u=w=k(z),
    \qquad
    v=y=h(x),
\]
as illustrated in Figure~\ref{fig:plant-controller-interconnection}.

We assume throughout the main result that each system is forward-complete under locally essentially bounded positive inputs, and that the closed-loop interconnection is
forward complete. We also assume a one-sided affine growth bound for the controller output, that is, there exists
\(\omega,\overline k\ge0\) such that
\[
    \dot w(t)
    =
    \frac{d}{dt}k(z(t))
    \le \overline k w(t) +\omega
    \qquad
    \forall t\ge0.
\]
This one-sided growth assumption is the natural condition in the positive
setting. 

We will next look at two conditions that, if satisfied by the plant and controller, guarantee boundedness of the control signal under interconnection.

\section{On the boundedness of the control signal}
\label{sec:bndnss_cs}

In this section we will study specific properties that, if satisfied by the plant and the controller, will guarantee boundedness of the control signal. Informally, for the plant $P$ \eqref{eq:x-subsystem}, we require that it persistently amplifies its input signal, that is that for every $\kappa>0$ there exists some threshold $U_\kappa>0$ such that 
\[
u(t)\ge U_\kappa \quad \forall t\ge0 \quad\implies\quad \liminf_{t\to\infty}h(x(t,x_0))\ge \kappa,
\]
``uniformly'' for every initial condition $x_0$.
See Figure~\ref{fig:PIPO_diagram} for an illustration of this property.
\begin{figure}[ht]
  \begin{center}
\begin{tikzpicture}[>=Stealth, thick]

\begin{scope}
  \draw[->] (0,0) -- (0,2.6) node[above] {$u$};
  \draw[->] (0,0) -- (3.2,0) node[right] {$t$};

  \draw[dashed, green!60!black] (0,1.4) -- (3.0,1.4);
  
  \node at (-0.4,1.4) {$U(\kappa)$};

  \draw[thick, blue] (0,1.9)
    .. controls (0.5,2.3) and (0.9,1.5) .. (1.4,1.9)
    .. controls (1.9,2.3) and (2.3,1.5) .. (2.8,1.9);
\end{scope}

\begin{scope}[xshift=4.7cm]
  \draw (0,0) rectangle (2.6,2.2);
  \node at (1.3,1.1) {\Large PIPO};

  \draw[->] (-1.0,1.1) -- (0,1.1);

  \draw[->] (2.6,1.1) -- (3.6,1.1);
\end{scope}

\begin{scope}[xshift=9.2cm]
  \draw[->] (0,0) -- (0,2.6) node[above] {$y$};
  \draw[->] (0,0) -- (3.2,0) node[right] {$t$};

  \draw[dashed, green!60!black] (0,1.7) -- (3.0,1.7);
  \node at (-0.15,1.7) {$\kappa$};

  \draw[dashed, green!60!black] (2.23,0) -- (2.23,2.5);
  \node at (2.23,-0.25) {$T(\kappa)$};

  \draw[thick, blue]
    (0,2.4)
    .. controls (0.15,1.4) and (0.35,0.3) .. (0.6,0.4)
    .. controls (0.9,0.5) and (1.0,1.1) .. (1.3,1.0)
    .. controls (1.7,0.9) and (1.9,1.6) .. (2.2,1.7)
    .. controls (2.5,1.75) and (2.8,1.7) .. (3.0,1.7);
\end{scope}
\end{tikzpicture}
\end{center}
\caption{The PIPO property -- persistently large inputs eventually drive the system's output to also be proportionally persistently large.}
\label{fig:PIPO_diagram}
\end{figure}

More precisely, consider the following definition.

\begin{definition}[persistent-input/persistent-output property]\label{def:PIPO}
    A positive system of the form of \eqref{eq:x-subsystem} is said to satisfy the persistent-input/persistent-output (PIPO) property if there exist functions 
    \[
    U:\Rp\to\Rp \text{\quad and \quad} T:\Rp\to\Rp 
    \] 
    such that for every $\kappa\ge0$, every $x_0\in\Rp[n]$ it holds that:
    \[
    u(t)\ge U(\kappa) \quad\forall \, t\ge0 \quad\implies \quad
    y(t)\geq\, \kappa \quad\forall\,t\ge T(\kappa).
    \]
\end{definition}

A closely related property is also required of the controller $C$ \eqref{eq:z-subsystem}. Specifically, while the plant is required to amplify its input signal, the controller must attenuate it proportionally to its magnitude, so that the resulting control signal after interconnection is bounded. Intuitively we ask that for every $\eps>0$ there exist some $V_\eps>0$ such that
\[
v(t)\ge V_\eps\quad\forall t\ge0\quad\implies\quad\limsup_{t\to\infty}k(z(t,z_0))\le\eps,
\]
``uniformly'', and for every initial condition $z_0$.
See Figure~\ref{fig:PITO_diagram} for an illustration of this property.
\begin{figure}[ht]
  \begin{center}
\begin{tikzpicture}[>=Stealth, thick]
\begin{scope}
  \draw[->] (0,0) -- (0,2.6) node[above] {$v$};
  \draw[->] (0,0) -- (3.2,0) node[right] {$t$};
  \draw[dashed, green!60!black] (0,1.4) -- (3.0,1.4);
  \node at (-0.4,1.4) {$V(\eps)$};
  \draw[thick, blue] (0,1.9)
    .. controls (0.5,2.3) and (0.9,1.5) .. (1.4,1.9)
    .. controls (1.9,2.3) and (2.3,1.5) .. (2.8,1.9);
\end{scope}
\begin{scope}[xshift=4.7cm]
  \draw (0,0) rectangle (2.6,2.2);
  \node at (1.3,1.1) {\Large PITO};
  \draw[->] (-1.0,1.1) -- (0,1.1);
  \draw[->] (2.6,1.1) -- (3.6,1.1);
\end{scope}
\begin{scope}[xshift=9.2cm]
  \draw[->] (0,0) -- (0,2.6) node[above] {$w$};
  \draw[->] (0,0) -- (3.2,0) node[right] {$t$};
  \draw[dashed, green!60!black] (0,1.2) -- (3.0,1.2);
  \node at (-0.2,1.2) {$\eps$};
  \draw[dashed, green!60!black] (2.35,0) -- (2.35,2.5);
  \node at (2.35,-0.25) {$T(\eps,|w_0|)$};
  \draw[thick, blue]
    (0,0.3)
    .. controls (0.4,0.6) and (0.7,1.3) .. (1.1,1.55)
    .. controls (1.5,1.75) and (1.8,1.6) .. (2.1,1.35)
    .. controls (2.4,1.1) and (2.7,1.15) .. (3.0,1.15);
\end{scope}
\end{tikzpicture}
\end{center}
\caption{The PITO property -- persistently large inputs eventually drive the system's output to be proportionally persistently small}
\label{fig:PITO_diagram}
\end{figure}  

More precisely, consider the following definition.

\begin{definition}[persistent-input/transient-output property]
\label{def:PITO}
A system of the form \eqref{eq:z-subsystem} is said to satisfy the persistent-input/transient-output (PITO)
property from $v$ to $w$ if
there exist two functions
\[
V : \Rp \to \Rp
\text{\quad and \quad}
T : \Rp[2] \to \Rp
\]
with \(T\) non-decreasing in its second argument, such that, for every $\eps>0$, and every 
$z(0)\in\Rp[m]$,
\[
v(t)\geq V(\varepsilon)
    \;\; \forall\,t\ge0
    \quad\implies\quad
    w(t)\leq \varepsilon\;\;\forall\,t\ge T(\eps,k(z(0))) .
\]
\end{definition}

The properties characterized in Definitions~\ref{def:PIPO} and~\ref{def:PITO} are complementary in nature, both characterizing a specific input-to-output behavior of the system -- roughly either amplification or attenuation of the input signal. 
We use these properties to prove the following result.

\begin{theorem}
    \label{thm:no_windup}
    Consider the positive systems \eqref{eq:x-subsystem} and \eqref{eq:z-subsystem}, interconnected by $u=w=k(z)$ and $v=y(t)=h(x(t))$. Assume that
    \begin{itemize}
        \item The defined closed-loop interconnection of both subsystems is forward-complete;
        \item The controller satisfies an affine output growth bound, i.e. there exist $\overline k,\omega>0$ such that $\dot w(t) \le \overline kw(t)+\omega$ hold for all $z\in\Rp[m]$; 
        \item The controller \eqref{eq:z-subsystem} satisfies the PITO property; and
        \item The plant \eqref{eq:x-subsystem} satisfies the PIPO property.
    \end{itemize}
    Then, for any initialization $z(0)\in\Rp[m]$ and $x(0)\in\Rp[n]$, the control signal $u(t)=w(t)=k(z(t))$ is bounded for all $t\ge 0$.
\end{theorem}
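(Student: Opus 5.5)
The plan is to argue by contradiction, using the one-sided growth bound to turn a large value of $w$ into a long excursion above a threshold. During that excursion we chain PIPO and then PITO to force $w$ back down inside it. Write $\overline k,\omega>0$ for the growth constants and $\Phi_s(a):=(a+\omega/\overline k)e^{\overline k s}-\omega/\overline k$. By comparison, $\dot w\le \overline k w+\omega$ gives $w(t_1+s)\le \Phi_s(w(t_1))$ for $s\ge0$. Equivalently, rising from a level $a$ to a level $b>a$ takes time at least $\frac{1}{\overline k}\log\frac{b+\omega/\overline k}{a+\omega/\overline k}$.

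\textbf{Choice of constants.} Fix any $\eps>0$ (say $\eps=1$) and set $\kappa:=V(\eps)$. Take $\overline U:=\max\{U(\kappa),\eps+1\}$. Since any input above $\overline U$ is also above $U(\kappa)$, PIPO applies with threshold $\overline U$ and time $T_P:=T(\kappa)$ (the plant's $T$). Set $W_1:=\Phi_{T_P}(\overline U)$ and $\tau:=T_P+T_C(\eps,W_1)$, where $T_C$ is the controller's transient-time function. Finally choose $M:=\max\{\Phi_\tau(\overline U),\,\Phi_{\tau_0}(w(0))\}$, where $\tau_0:=T_P+T_C(\eps,\Phi_{T_P}(w(0)))$ handles the case $w(0)>\overline U$. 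The claim is that $w(t)\le M$ for all $t\ge0$; forward completeness guarantees the solution exists on $[0,\infty)$, so this is meaningful.

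\textbf{Key step (restart and extension).} Suppose $w(t^*)>M$ for some $t^*$. Let $t_0$ be the last time before $t^*$ at which $w=\overline U$, or $t_0=0$ if $w>\overline U$ on all of $[0,t^*]$. Then $w\ge\overline U$ on $[t_0,t^*]$, and by the growth estimate and the choice of $M$ we get $t^*-t_0>\tau$ (respectively $>\tau_0$).

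Now treat $x(t_0)$ as a fresh initial condition for $P$, which is legitimate because PIPO is uniform in $x_0$. Feed $P$ the input equal to $w(t_0+\cdot)$ on $[0,t^*-t_0]$ and equal to the constant $\overline U$ afterwards. By uniqueness of solutions (local Lipschitzness) and causality, the resulting trajectory coincides with the closed-loop $x$ on the interval. PIPO then yields $y\ge\kappa=V(\eps)$ on $[t_0+T_P,t^*]$.

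Next restart $C$ at $t_1:=t_0+T_P$ with initial output $w(t_1)\le\Phi_{T_P}(w(t_0))$, which is $W_1$ (respectively $\Phi_{T_P}(w(0))$). Extend its input by the constant $V(\eps)$ beyond $t^*$. PITO, together with monotonicity of $T_C$ in its second argument, gives $w(t)\le\eps<\overline U$ for all $t\ge t_1+T_C(\eps,W_1)=t_0+\tau$. This time lies inside $[t_0,t^*]$, contradicting $w\ge\overline U$ there. Hence $w\le M$ for all $t$.

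\textbf{Main obstacle.} The delicate point is the bookkeeping that lets the definitions be applied at all. Both PIPO and PITO require the input bound for all $t\ge0$, but in closed loop we only know it on a finite window. This is handled by the causal extension and uniqueness argument above, which needs some care to state for locally essentially bounded inputs. In addition, the PITO transient time must be bounded uniformly over the unknown value $w(t_1)$, which is where the growth bound and the monotonicity of $T_C$ in its second argument are essential. I would also take care that $\overline U>\eps$, to make the contradiction strict, and that the initial-excursion case $t_0=0$ is covered by the separate constant $\tau_0$.
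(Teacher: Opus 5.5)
Your proposal is correct and follows essentially the same route as the paper: the last time $w$ crosses a threshold above $\max\{U(V(\varepsilon)),\varepsilon\}$, the growth bound forcing a long excursion above that threshold, shifted PIPO, and then shifted PITO with the transient time controlled through monotonicity of $T$ in its second argument. The differences are cosmetic. The paper folds $w(0)$ into the threshold $A=\max\{U_\varepsilon,w(0),\varepsilon+\eta\}$ instead of using your separate $\tau_0$ case, and it appeals briefly to time invariance and causality where you spell out the input-extension and uniqueness argument.
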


\begin{proof}
Fix an arbitrary initial condition
\[
    x(0)\in\mathbb R^n_{\ge0},
    \qquad
    z(0)\in\mathbb R^m_{\ge0}.
\]
Since the closed-loop interconnection is forward complete, the corresponding
solution exists for all \(t\ge0\). One can verify that the interconnection is positive as well, thus all
signals satisfy
\[
    u(t)\ge0,
    \qquad
    v(t)=y(t)\ge0,
    \qquad
    w(t)\ge0
    \qquad
    \forall t\ge0.
\]
By the one-sided affine growth assumption on the controller output,
\[
    \dot w(t)\le \overline k w(t)+\omega
    \qquad
    \forall t\ge0,
\]
and by Gronwall's inequality,
for every \(t_2\ge t_1\ge0\),
\[
    w(t_2)\le\phi\big(w(t_1),t_2-t_1\big),
\]
where
\[
    \phi(w_1,\Delta)
    :=
    \begin{cases}
        \left(w_1+\dfrac{\omega}{\overline k}\right)e^{\overline k\Delta}-\dfrac{\omega}{\overline k}, & \overline k>0,\\[2mm]
        w_1+\omega\Delta, & \overline k=0.
    \end{cases}
\]
For fixed \(w_1\ge0\), \(\phi(w_1,\cdot)\) is nondecreasing on
\(\mathbb R_{\ge0}\) and \(\phi(w_1,0)=w_1\).

Fix \(\varepsilon>0\), and let $V_\eps$ be the corresponding gain for the PITO property and system \eqref{eq:z-subsystem} for the fixed $\eps$. Furthermore, let $U_\eps$ be the corresponding gain for the PIPO property and system \eqref{eq:x-subsystem} for $\kappa = V_\eps$, i.e. $U_\eps := U_{V_\eps}$.
Thus, for every initial condition \(x_0\in\mathbb R^n_{\ge0}\),
\[
    u(t)\ge U_\varepsilon
    \qquad
    \forall t\ge0
\]
implies
\[
    y(t)\ge V_\varepsilon
    \qquad
    \forall t\ge T_\varepsilon^x,
\]
where $T_\eps^x$ is the PIPO-associated time for $\kappa=V_\eps$.
Since the plant is time invariant and the PIPO property holds for every
initial condition in \(\mathbb R^n_{\ge0}\), the same implication holds on
shifted intervals. Namely, if
\[
    u(t)\ge U_\varepsilon
    \qquad
    \forall t\in[a,b],
    \qquad
    b-a\ge T_\varepsilon^x,
\]
then
\[
    y(t)\ge V_\varepsilon
    \qquad
    \forall t\in[a+T_\varepsilon^x,b].
\]
Choose any \(\eta>0\), and define
\[
    A:=\max\{U_\varepsilon,w(0),\varepsilon+\eta\}.
\]
Set
\[
    \delta:=\phi(A,T_\varepsilon^x).
\]
Let
\[
    T_\varepsilon^z:=T(\varepsilon,\delta)
\]
be the PITO time corresponding to accuracy \(\varepsilon\) and initial
controller-output bound \(\delta\). Finally, define
\[
    R
    :=
    \phi\big(A,\,T_\varepsilon^x+T_\varepsilon^z\big)+\eta.
\]
Since \(\phi(A,\cdot)\) is nondecreasing with \(\phi(A,0)=A\), we have
\(\phi(A,T_\varepsilon^x+T_\varepsilon^z)\ge A\), so the extra \(+\eta\)
ensures \(R\ge A+\eta>A\).
We claim that
\[
    w(t)\le R
    \qquad
    \forall t\ge0.
\]
Suppose, by contradiction, that this is false. Since \(w(0)\le A<R\), there
exists a first time \(\tau>0\) such that
\[
    w(\tau)=R,
    \qquad
    w(t)<R
    \quad
    \forall t\in[0,\tau).
\]
Since \(R>A\ge w(0)\), there exists a last time
\(\sigma\in[0,\tau)\) such that
\[
    w(\sigma)=A.
\]
By the definition of \(\sigma\),
\[
    w(t)\ge A
    \qquad
    \forall t\in[\sigma,\tau].
\]
Since \(u=w\) and \(A\ge U_\varepsilon\), it follows that
\[
    u(t)\ge U_\varepsilon
    \qquad
    \forall t\in[\sigma,\tau].
\]
Using the one-sided growth estimate, we obtain
\[
    R
    =
    w(\tau)
    \le
    \phi(w(\sigma),\tau-\sigma)
    =
    \phi(A,\tau-\sigma).
\]
By the definition of \(R\), this gives
\[
    \phi(A,T_\varepsilon^x+T_\varepsilon^z)+\eta
    \le
    \phi(A,\tau-\sigma),
\]
which forces \(\tau-\sigma\ge T_\varepsilon^x+T_\varepsilon^z\), since \(\phi(A,\cdot)\) is nondecreasing.

Define
\[
    s:=\sigma+T_\varepsilon^x.
\]
Since
\[
    \tau-\sigma\ge T_\varepsilon^x+T_\varepsilon^z,
\]
we have
\[
    s+T_\varepsilon^z\le \tau.
\]
By the shifted PIPO implication,
\[
    y(t)\ge V_\varepsilon
    \qquad
    \forall t\in[s,s+T_\varepsilon^z].
\]
Since \(v=y\) in the closed-loop interconnection, this gives
\[
    v(t)\ge V_\varepsilon
    \qquad
    \forall t\in[s,s+T_\varepsilon^z].
\]
Next, we show that the controller output at time \(s\) is bounded by
\(\delta\). Since \(s=\sigma+T_\varepsilon^x\), the one-sided growth estimate
gives
\[
    w(s)
    \le
    \phi(w(\sigma),T_\varepsilon^x)
    =
    \phi(A,T_\varepsilon^x)
    =
    \delta.
\]
Since \(w=k(z)\), this means
\[
    k(z(s))\le \delta.
\]
We now apply the PITO property to the controller trajectory shifted to time
\(s\). Because the PITO \(T\) is nondecreasing in its second argument and
\[
    k(z(s))\le \delta,
\]
we have
\[
    T(\varepsilon,k(z(s)))
    \le
    T(\varepsilon,\delta)
    =
    T_\varepsilon^z.
\]
Moreover,
\[
    v(t)\ge V_\varepsilon
    \qquad
    \forall t\in[s,s+T_\varepsilon^z].
\]
By causality, the value of \(w(s+T_\varepsilon^z)\) depends only on the input
over the interval \([s,s+T_\varepsilon^z]\). Therefore the PITO implication
applied to the shifted controller solution gives
\[
    w(s+T_\varepsilon^z)\le \varepsilon.
\]
On the other hand,
\[
    s+T_\varepsilon^z\in[\sigma,\tau],
\]
and by construction
\[
    w(t)\ge A
    \qquad
    \forall t\in[\sigma,\tau].
\]
Therefore
\[
    w(s+T_\varepsilon^z)\ge A>\varepsilon,
\]
which contradicts the PITO conclusion that
\[
    w(s+T_\varepsilon^z)\le \varepsilon.
\]
Hence, by contradiction
\[
    w(t)\le R
    \qquad
    \forall t\ge0.
\]
Since \(u=w=k(z)\), the control signal \(u(t)\) is bounded on
\([0,\infty)\).
\end{proof}

Intuitively, one can understand the interconnection of a PIPO and a PITO system as a high-gain negative feedback for the control input: the PIPO system amplifies the control signal, which is then attenuated by the PITO system. This intuitive idea behind the proof of Theorem~\ref{thm:no_windup} is illustrated in Figure~\ref{fig:PITO_PITO}.

\begin{figure}[ht]
\begin{center}
\begin{tikzpicture}[>=Stealth, thick, scale=0.75, transform shape]

\node[draw, rounded corners, minimum width=3.6cm, minimum height=3.3cm,
      font=\ttfamily\bfseries\itshape\huge] (pito) at (6,0)  {PITO};
\node[draw, rounded corners, minimum width=3.6cm, minimum height=3.3cm,
      font=\ttfamily\bfseries\itshape\huge] (piso) at (13,0) {PIPO};

\coordinate (splitL) at (1.0,0);
\coordinate (splitR) at (18.0,0);
\coordinate (outEdge) at (16.4,0);

\draw[->] (splitL) -- (pito.west);
\draw[->]  (pito.east) -- (piso.west);
\draw[-]  (piso.east) -- (splitR);
\draw[-] (splitR) -- (outEdge);

\draw[-] (splitR) |- ++(0,-2.8) -| (splitL);

\begin{scope}[shift={(1.0,0.6)}, scale=0.85]
  \draw[->] (0,0) -- (0,2.6) node[right, blue, xshift=2pt, pos=0.92] {\Large $y$};
  \draw[->] (0,0) -- (3.2,0);
  \draw[dashed, green!60!black] (0,1.7) -- (3.0,1.7);
  \draw[thick, blue]
    (0,2.4)
    .. controls (0.15,1.4) and (0.35,0.3) .. (0.6,0.4)
    .. controls (0.9,0.5) and (1.0,1.1) .. (1.3,1.0)
    .. controls (1.7,0.9) and (1.9,1.6) .. (2.2,1.7)
    .. controls (2.5,1.75) and (2.8,1.7) .. (3.0,1.7);
\end{scope}

\begin{scope}[shift={(15.6,0.6)}, scale=0.85]
  \draw[->] (0,0) -- (0,2.6) node[right, blue, xshift=2pt, pos=0.92] {\Large $y$};
  \draw[->] (0,0) -- (3.2,0);
  \draw[dashed, green!60!black] (0,1.7) -- (3.0,1.7);
  \draw[thick, blue]
    (0,2.4)
    .. controls (0.15,1.4) and (0.35,0.3) .. (0.6,0.4)
    .. controls (0.9,0.5) and (1.0,1.1) .. (1.3,1.0)
    .. controls (1.7,0.9) and (1.9,1.6) .. (2.2,1.7)
    .. controls (2.5,1.75) and (2.8,1.7) .. (3.0,1.7);
\end{scope}

\begin{scope}[shift={(8.3,0.6)}, scale=0.85]
  \draw[->] (0,0) -- (0,2.6) node[right, blue, xshift=2pt, pos=0.92] {\Large $u$};
  \draw[->] (0,0) -- (3.2,0);
  \draw[dashed, green!60!black] (0,1.7) -- (3.0,1.7);
  \draw[thick, blue]
    (0,0.8)
    .. controls (0.4,0.5) and (0.6,1.7) .. (1.0,1.3)
    .. controls (1.4,0.9) and (1.6,2.4) .. (2.0,1.9)
    .. controls (2.4,1.4) and (2.6,2.7) .. (3.0,2.4);
\end{scope}


\begin{scope}[shift={(8.3,-2.35)}, scale=0.85]
  \draw[->] (0,0) -- (0,2.6) node[right, blue, xshift=2pt, pos=0.92] {\Large $u$};
  \draw[->] (0,0) -- (3.2,0);
  \draw[dashed, green!60!black] (0,1.2) -- (3.0,1.2);
  \draw[thick, blue]
    (0,0.3)
    .. controls (0.4,0.6) and (0.7,1.3) .. (1.1,1.55)
    .. controls (1.5,1.75) and (1.8,1.6) .. (2.1,1.35)
    .. controls (2.4,1.1) and (2.7,1.15) .. (3.0,1.15);
\end{scope}


\coordinate (midTopTR) at (11.05,2.85);  
\coordinate (rightTL)  at (15.55,2.85);  

\coordinate (rightTR) at (17.85, 2.85);

\coordinate (rightBL)  at (15.55,0.55);  
\coordinate (leftBR)   at (3.75,0.55);   

\coordinate (leftTR)   at (3.75,2.85);   

\coordinate (leftR)    at (3.75,1.70);   
\coordinate (leftB)    at (1.8,0.4);
\coordinate (midBotL)  at (8.30,-2.25);  

\draw[->, red, line width=1.2pt, opacity=0.5,]
    (midTopTR)
    .. controls (12.2,3.55) and (14.2,3.55) ..
    (rightTL);

\draw[->, red, line width=1.2pt, opacity=0.5,]
    (rightTR)
    .. controls (16.6,3.7) and (8.5,5.1) ..
    (leftTR);

\draw[->, red, line width=1.2pt, opacity=0.5,]
    (leftB)
    .. controls (1.8,-1.2) and (3.7,-3.25) ..
    (midBotL);
\end{tikzpicture}
\end{center}
\caption{Intuition on PIPO/PITO interconnections.
The input $u$ (top middle signal) being large implies that the PIPO
plant produces an eventually large $y$,
which as input to a PITO controller (after a long enough time) implies
boundedness of $u$.}
\label{fig:PITO_PITO}
\end{figure} 

The intuitive mechanics behind the result is that the plant being PIPO means it is guaranteed to amplify any signal in its input, while the controller being PITO means that any persistent input will attenuate its output, which finally guarantees boundedness of the control signal. Notice, however, that Theorem~\ref{thm:no_windup} does not guarantee by itself boundedness of all relevant states or even of the output of the plant. From Theorem~\ref{thm:no_windup}, boundedness of all states can follow from bounded-input bounded-state (BIBS) stability through the following corollary. 
\begin{corollary}
    \label{cor:PIPOTO_BIBS}
    Assume the \(x\)-subsystem \eqref{eq:x-subsystem} and the \(z\)-subsystem \eqref{eq:z-subsystem} satisfy the conditions in Theorem~\ref{thm:no_windup}. Furthermore, assume the \(x\)-subsystem is BIBS stable with respect to the input \(u\). Then, for any initial condition \((x_0,z_0)\), the solution of the \(x\)-subsystem under the closed-loop interconnection, \(\xsol(t)\), is bounded on \([0,\infty)\).
\end{corollary}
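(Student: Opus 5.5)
The plan is to combine Theorem~\ref{thm:no_windup} with the BIBS definition. The key point is to treat the closed-loop control signal as an exogenous input to the plant.

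First, fix an initial condition $(x_0,z_0)\in\Rp[n]\times\Rp[m]$. The closed loop is assumed forward complete, so the solution $(\xsol(t),z(t))$ exists for all $t\ge0$. The interconnection is positive, as noted in the proof of Theorem~\ref{thm:no_windup}, so $u(t)=k(z(t))\ge0$. Theorem~\ref{thm:no_windup} applies under the standing hypotheses and gives a constant $R=R(x_0,z_0)$ such that $0\le u(t)\le R$ for all $t\ge0$. Set $M_u:=\max\{R,1\}$; this makes $M_u>0$, as the BIBS definition requires.

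Second, I view $\hat u(t):=k(z(t))$ as an open-loop input to the plant \eqref{eq:x-subsystem}. This input is continuous, because $k$ is locally Lipschitz and $z$ is absolutely continuous. It is therefore locally essentially bounded and nonnegative, with $\hat u(t)\le M_u$.

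Third, I identify the two trajectories. The solution of $\dot x=f(x,\hat u(t))$ with $x(0)=x_0$ is unique, since $f$ is locally Lipschitz in $x$ and $\hat u$ is measurable and bounded. It exists globally by forward completeness of the plant. The closed-loop component $\xsol(t)$ solves this same initial value problem, so the two coincide on $[0,\infty)$.

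Finally, BIBS stability yields $M_x=M_x(x_0,M_u)$ with $\|\xsol(t)\|\le M_x$ for all $t\ge0$, which proves the claim. The only step needing care is this identification of the closed-loop $x$-trajectory with the open-loop solution driven by an admissible input; uniqueness of solutions handles it. I expect no real obstacle. The bound depends on $(x_0,z_0)$ through $R$, which is consistent with the non-uniform form of BIBS used in the paper.
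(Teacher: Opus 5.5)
Your proposal is correct and follows essentially the same route as the paper: invoke Theorem~\ref{thm:no_windup} to bound $u(t)=k(z(t))$ by some $M_u>0$, then apply the BIBS definition to obtain $M_x(x_0,M_u)$. Your extra steps make explicit two details the paper leaves implicit: taking $M_u=\max\{R,1\}$, and identifying the closed-loop $x$-trajectory with the open-loop solution driven by the admissible input $k(z(\cdot))$ via uniqueness.
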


\begin{proof}
Fix an arbitrary initial condition
\[
    (x(0),z(0))=(x_0,z_0)\in\mathbb R^n\times\mathbb R^m .
\]
By Theorem~\ref{thm:no_windup}, the closed-loop control signal
\[
    u(t)=w(t)=k(z(t))
\]
is bounded on \([0,\infty)\). Therefore, there exists
\(M_u>0\) such that
\[
    u(t)\le M_u
    \qquad
    \forall t\ge0.
\]
Since the \(x\)-subsystem is BIBS stable with respect to the input \(u\), it
follows that there exists \(M_x=M_x(x_0,M_u)>0\) such that
\[
    \|\xsol(t)\|\le M_x
    \qquad
    \forall t\ge0.
\]
Hence the \(x\)-subsystem trajectory under the closed-loop interconnection is
bounded on \([0,\infty)\).
\end{proof}

\begin{remark}
    Notice that BIBS stability is, in general, too strong of an assumption to make of a controller, and if one wants to prove boundedness of all states under interconnection, a more adhoc analysis is required. If, however, one can certify BIBS stability of both plant and controller, boundedness of all states follows naturally by extending the previous Corollary.
\end{remark}

\section{Sufficient conditions for the PIPO and PITO properties}
\label{sec:exmpl}

Having established the key PIPO and PITO properties for the plant and the controller, we next focus on the class of plants that fit this framework. Specifically, we will look at a class of monotone systems, showing that, under very mild conditions on the input-output characteristic, one can show that PIPO holds.

\subsection{PIPO and monotone systems}

Consider the following assumptions.

\begin{assumption}[Monotonicity \cite{mcs_angeli_2003,AngeliS2004}]
\label{ass:monotone}
The plant subsystem~\eqref{eq:x-subsystem} is positive and monotone with
respect to both state and input, and the output map \(h\) is order preserving.
That is, for every nonnegative input
\(u:\mathbb R_{\ge0}\to\mathbb R_{\ge0}\) and every initial condition
\(x_0\in\mathbb R^n_{\ge0}\),
\[
    \xsol(t,x_0,u)\in\mathbb R^n_{\ge0}
    \qquad
    \forall t\ge0.
\]
Moreover, if \(x_0\le \tilde x_0\) and
\(u(t)\le \tilde u(t)\) for all \(t\ge0\), then
\[
    \xsol(t,x_0,u)\le \xsol(t,\tilde x_0,\tilde u)
    \qquad
    \forall t\ge0,
\]
and, whenever \(x\le \tilde x\),
\[
    h(x)\le h(\tilde x).
\]
\end{assumption}

\begin{assumption}[Class $\mathcal{K}_\infty$ characteristic]
    \label{ass:characteristic}
    The plant subsystem \eqref{eq:x-subsystem} admits a well-defined input-state characteristic, given by
\[
    k_x:\mathbb R_{\ge0}\to\mathbb R^n_{\ge0},
    \qquad
    \bar u\mapsto k_x(\bar u),
\]
where \(k_x(\bar u)\) is the globally asymptotically stable equilibrium corresponding to the constant input
\(\bar u\). The associated input-output characteristic is
\[
    k_y:\mathbb R_{\ge0}\to\mathbb R_{\ge0},
    \qquad
    k_y(\bar u):=h(k_x(\bar u)).
\]
and is of class \(\mathcal K_\infty\);
that is, \(k_y\) is continuous, strictly increasing,
\[
    k_y(0)=0,
\]
and
\[
    \lim_{\bar u\to\infty}k_y(\bar u)=\infty .
\]
\end{assumption}

Under these assumptions, stable monotone plants satisfy the PIPO property.

\begin{lemma}
\label{lem:monotone-pipo}
Suppose that the plant subsystem~\eqref{eq:x-subsystem} satisfies
Assumptions~\ref{ass:monotone}, and~\ref{ass:characteristic}.
Then the plant is BIBS stable and satisfies the PIPO property.
\end{lemma}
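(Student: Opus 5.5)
The plan is to obtain both conclusions by comparison arguments based on Assumption~\ref{ass:monotone}, with the globally asymptotically stable equilibria of Assumption~\ref{ass:characteristic} as reference trajectories. The guiding observation is that the constant-input solutions starting at the extreme initial state \(0\) (for lower bounds) or at \(x_0\) itself (for upper bounds) dominate every other solution, and these reference solutions converge by Assumption~\ref{ass:characteristic}.

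\emph{BIBS stability.} Fix \(x_0\in\mathbb R^n_{\ge0}\) and \(M_u>0\), and let \(0\le u(t)\le M_u\).
\begin{enumerate}[label=(\roman*)]
\item Positivity gives \(0\le \xsol(t,x_0,u)\).
\item Monotonicity with the constant input \(\tilde u\equiv M_u\) gives
\[
0\le \xsol(t,x_0,u)\le \xsol(t,x_0,M_u)\qquad\forall t\ge0 .
\]
\item The reference solution \(\xsol(\cdot,x_0,M_u)\) is continuous. It converges to \(k_x(M_u)\) because that equilibrium is globally asymptotically stable. Hence it is bounded on \([0,\infty)\), say by \(M_x(x_0,M_u)\).
\item Since \(0\le a\le b\) componentwise implies \(\|a\|\le\|b\|\), we get \(\|\xsol(t,x_0,u)\|\le M_x\) for all \(t\ge0\).
\end{enumerate}
This bound depends only on \((x_0,M_u)\) and not on the particular input, which is exactly the BIBS property.

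\emph{PIPO.} For \(\kappa=0\), take \(U(0)=T(0)=0\); the conclusion holds trivially since \(y\ge0\). For \(\kappa>0\), the steps are as follows.
\begin{enumerate}[label=(\roman*)]
\item Since \(k_y\) is of class \(\mathcal K_\infty\), it is a bijection of \(\mathbb R_{\ge0}\). Define \(U(\kappa):=k_y^{-1}(\kappa+1)\), so that \(k_y(U(\kappa))=\kappa+1>\kappa\).
\item Consider the reference solution \(\underline x(t):=\xsol(t,0,U(\kappa))\) from the minimal initial state \(0\). By global asymptotic stability, \(\underline x(t)\to k_x(U(\kappa))\).
\item Since \(h\) is locally Lipschitz, hence continuous, \(h(\underline x(t))\to k_y(U(\kappa))=\kappa+1\).
\item Therefore there is a finite time \(T(\kappa)\) with \(h(\underline x(t))\ge\kappa\) for all \(t\ge T(\kappa)\). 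This time depends only on \(\kappa\).
\item For any \(x_0\ge0\) and any input with \(u(t)\ge U(\kappa)\), monotonicity gives \(\xsol(t,x_0,u)\ge\underline x(t)\).
\item Order preservation of \(h\) then gives \(y(t)\ge h(\underline x(t))\ge\kappa\) for all \(t\ge T(\kappa)\).
\end{enumerate}

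The main point needing care is the uniformity of \(T(\kappa)\) over all initial conditions, which Definition~\ref{def:PIPO} demands. A naive argument would invoke convergence of each trajectory separately, with no uniform time. The comparison with the single bottom trajectory started at \(0\) removes this difficulty. The margin \(+1\) in \(U(\kappa)\) is what turns asymptotic convergence into reaching the level \(\kappa\) in finite time. Convergence is needed only along this one trajectory, so no quantitative convergence rate is required; this is consistent with the remark in the introduction that explicit transient times need extra quantitative information.
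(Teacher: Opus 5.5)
Your proof is correct and follows the paper's argument essentially step for step. For BIBS, you compare with the constant input $M_u$ from the same initial state $x_0$. For PIPO, you compare with the single trajectory $\xsol(t,0,U(\kappa))$ started at the origin, which gives a $T(\kappa)$ uniform over initial conditions; the only cosmetic difference is your margin $k_y(U(\kappa))=\kappa+1$ in place of the paper's $k_y(U_\kappa)>2\kappa$.
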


\begin{proof}
We prove the two claims separately.

\emph{First, we prove BIBS stability.} Fix an arbitrary initial condition
\(x_0\in\mathbb R^n_{\ge0}\) and an arbitrary constant \(M_u>0\). Let
\(u:\mathbb R_{\ge0}\to\mathbb R_{\ge0}\) be any admissible input satisfying
\[
    0\le u(t)\le M_u
    \qquad
    \forall t\ge0.
\]
By monotonicity with respect to the input, using the same initial condition
\(x_0\), we have
\[
    \xsol(t,x_0,u)\le \xsol(t,x_0,M_u)
    \qquad
    \forall t\ge0,
\]
where \(M_u\) denotes the constant input \(u(t)\equiv M_u\). By
Assumption~\ref{ass:characteristic},
\[
    \xsol(t,x_0,M_u)\to k_x(M_u)
    \qquad
    \text{as } t\to\infty .
\]
In particular, the trajectory \(t\mapsto \xsol(t,x_0,M_u)\) is bounded.
Therefore, there exists \(\overline M_x=\overline M_x(x_0,M_u)>0\) such that
\[
    \|\xsol(t,x_0,M_u)\|\le \overline M_x
    \qquad
    \forall t\ge0.
\]
Since the system is positive,
\[
    \xsol(t,x_0,u)\in\mathbb R^n_{\ge0}
    \qquad
    \forall t\ge0.
\]
Together with
\[
    0\le \xsol(t,x_0,u)\le \xsol(t,x_0,M_u),
\]
this implies
\[
    \|\xsol(t,x_0,u)\|\le \overline M_x
    \qquad
    \forall t\ge0.
\]
Hence the plant is BIBS stable with respect to nonnegative inputs.

\emph{We now prove the PIPO property.} The case \(\kappa=0\) is trivial, so fix
\(\kappa>0\). Since \(k_y\in\mathcal K_\infty\), there exists
\(U_\kappa>0\) such that
\[
    k_y(U_\kappa)>2\kappa .
\]
Consider the reference trajectory initialized at the origin and driven by the
constant input \(U_\kappa\):
\[
    x_r(t):=\xsol(t,0,U_\kappa),
    \qquad
    y_r(t):=h(x_r(t)).
\]
By Assumption~\ref{ass:characteristic},
\[
    x_r(t)\to k_x(U_\kappa)
    \qquad
    \text{as } t\to\infty .
\]
Since \(h\) is continuous, it follows that
\[
    y_r(t)=h(x_r(t))\to h(k_x(U_\kappa))
    =
    k_y(U_\kappa)
    >
    2\kappa .
\]
Therefore, there exists \(T_\kappa\ge0\) such that
\[
    y_r(t)\ge \kappa
    \qquad
    \forall t\ge T_\kappa .
\]

Now fix any \(\overline T\ge T_\kappa\), any initial condition
\(x_0\in\mathbb R^n_{\ge0}\), and any admissible nonnegative input
\(u:\mathbb R_{\ge0}\to\mathbb R_{\ge0}\) satisfying
\[
    u(t)\ge U_\kappa
    \qquad
    \forall t\ge0.
\]
Since \(0\le x_0\) and \(U_\kappa\le u(t)\) for all \(t\ge0\), monotonicity
gives
\[
    x_r(t)
    =
    \xsol(t,0,U_\kappa)
    \le
    \xsol(t,x_0,u)
    \qquad
    \forall t\ge0.
\]
Because \(h\) is order preserving,
\[
    y(t)
    =
    h(\xsol(t,x_0,u))
    \ge
    h(x_r(t))
    =
    y_r(t)
    \qquad
    \forall t\ge0.
\]
In particular, at \(t=\overline T\),
\[
    y(\overline T)\ge y_r(\overline T)\ge \kappa.
\]
Since the output is nonnegative, this implies
\[
    |y(\overline T)|\ge \kappa .
\]
Thus, the PIPO implication holds for every \(\overline T\ge T_\kappa\), and hence the
plant satisfies the PIPO property.
\end{proof}

\subsection{PITO and the Antithetic Integral Controller}

We next look at the antithetic integral controller \cite{AFC2016}, which is an important control structure in synthetic biology. We will first show that it  is positive and forward complete and then show that it satisfies the PITO property.

Consider the antithetic integral controller in its usual positive-systems
form
\begin{equation}
\label{eq:antithetic_controller}
C_{\mathrm{AIC}}:
\begin{cases}
    \dot z_1=\alpha_1-\alpha_2z_1z_2,\\[1mm]
    \dot z_2=\alpha_3v-\alpha_4z_1z_2,\\[1mm]
    w=z_1,
\end{cases}
\end{equation}
where \(z=(z_1,z_2)\in\mathbb R^2_{\ge0}\), \(y\in\mathbb R_{\ge0}\), and
\(\alpha_i>0\). In the standard antithetic feedback interconnection, the input
\(v\) is the nonnegative plant output \(y\), and the controller output \(w=z_1\) is
the plant input \(u\).

The controller \eqref{eq:antithetic_controller} is naturally defined on the
positive cone. We first record positive invariance and forward completeness, showing that the positive cone is the natural state space of \eqref{eq:antithetic_controller}.

\begin{lemma}[Positive invariance and forward completeness]
\label{lem:aic-positive-forward-complete}
For every locally essentially bounded input
\(v:\mathbb R_{\ge0}\to\mathbb R_{\ge0}\) and every initial condition
\(z(0)\in\mathbb R^2_{\ge0}\), the solution of
\eqref{eq:antithetic_controller} is forward complete and satisfies
\[
    z(t)\in\mathbb R^2_{\ge0}
    \qquad
    \forall t\ge0.
\]
\end{lemma}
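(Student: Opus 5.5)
The right-hand side of \eqref{eq:antithetic_controller} is locally Lipschitz in \(z\) and measurable and locally essentially bounded in \(t\), so Carathéodory theory gives a unique maximal absolutely continuous solution on some interval \([0,t^\ast)\). The plan has two parts: first show the solution stays in \(\mathbb R^2_{\ge0}\) on \([0,t^\ast)\), then show it cannot blow up, so that \(t^\ast=\infty\).

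\emph{Positivity.} The boundary of the orthant is non-repelling. If \(z_1=0\), then \(\dot z_1=\alpha_1>0\). If \(z_2=0\), then \(\dot z_2=\alpha_3 v(t)\ge0\). To avoid tangency issues when \(v\) vanishes, I will argue through linear scalar equations along the solution rather than through a subtangential condition. Write
\[
\dot z_1=\alpha_1-a(t)z_1, \qquad a(t):=\alpha_2z_2(t),
\]
\[
\dot z_2=\alpha_3v(t)-b(t)z_2, \qquad b(t):=\alpha_4z_1(t).
\]
Here \(a,b\) are continuous on \([0,t^\ast)\). Variation of constants then gives
\[
z_1(t)=e^{-\int_0^t a}z_1(0)+\int_0^t e^{-\int_s^t a}\,\alpha_1\,ds\ge0,
\]
and similarly \(z_2(t)\ge0\), because \(v\ge0\) and \(z(0)\ge0\). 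This is valid since the solution, once it exists, satisfies these linear equations identically. Hence \(z(t)\in\mathbb R^2_{\ge0}\) throughout the interval of existence.

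\emph{Forward completeness.} With positivity in hand, the annihilation terms are nonpositive: \(\dot z_1\le\alpha_1\) and \(\dot z_2\le\alpha_3 v(t)\). Integrating gives
\[
0\le z_1(t)\le z_1(0)+\alpha_1 t,
\qquad
0\le z_2(t)\le z_2(0)+\alpha_3\int_0^t v(s)\,ds.
\]
Both bounds are finite on every bounded interval, since \(v\) is locally essentially bounded, hence locally integrable. Thus \(z\) stays bounded on \([0,t^\ast)\) whenever \(t^\ast<\infty\), contradicting the blow-up characterization of maximal solutions. So \(t^\ast=\infty\).

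The only delicate point is the positivity step when \(v\) is merely measurable and may vanish. There a naive invariance argument of the form ``vector field points inward'' is not strict. The integrating-factor representation sidesteps this, and it is the step I would write out most carefully.
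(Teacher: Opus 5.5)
Your proof is correct. The forward-completeness half is essentially the paper's argument: drop the nonpositive annihilation terms, integrate to get $z_1(t)\le z_1(0)+\alpha_1 t$ and $z_2(t)\le z_2(0)+\alpha_3\int_0^t v$, and use the continuation property of maximal solutions.

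The positivity half takes a different route. The paper checks the tangent-cone condition on the two faces of $\mathbb R^2_{\ge0}$ ($\dot z_1=\alpha_1>0$ on $z_1=0$, $\dot z_2=\alpha_3v\ge0$ on $z_2=0$) and appeals to Nagumo's theorem. You instead freeze the other species along the trajectory. You then treat each equation as a scalar linear equation $\dot z_i=(\text{nonnegative source})-c_i(t)z_i$ with continuous coefficient, and read nonnegativity off the integrating-factor representation.

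The Nagumo route is a one-line boundary check. It applies to any locally Lipschitz field satisfying the quasi-positivity condition $F_i(z,v)\ge0$ whenever $z_i=0$, however the field depends on the other coordinates. However, the tangency on the $z_2$-face is non-strict when $v=0$, and $v$ is only measurable. A fully rigorous version therefore needs Nagumo's theorem for Lipschitz dynamics driven by measurable inputs, with non-strict tangency handled through uniqueness.

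Your argument exploits the AIC-specific structure: each annihilation term is a multiple of the species being consumed. This gives an elementary, self-contained proof for absolutely continuous solutions. It handles $v=0$ automatically and needs no sign information on $c_i(t)$, so there is no bootstrapping between the coordinates. The price is that it does not transfer to controllers whose consumption terms are not proportional to the consumed species.

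When writing it out, note that the $z_2$-equation holds only almost everywhere. Justify the formula by differentiating the absolutely continuous function $e^{\int_0^t b}\,z_2(t)$ rather than by classical variation of constants.
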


\begin{proof}
Let
\[
    K:=\mathbb R^2_{\ge0}.
\]
We first prove forward invariance of \(K\). Since \(K\) is closed and convex,
Nagumo's tangent-cone criterion states that \(K\) is forward invariant if
\[
    F(z,v)\in T_K(z)
    \qquad
    \forall z\in K,\quad \forall v\ge0,
\]
where \(F\) is the vector field of \eqref{eq:antithetic_controller} and
\(T_K(z)\) is the tangent cone of \(K\) at \(z\).

For the positive orthant,
\[
    T_K(z)
    =
    \left\{
        q\in\mathbb R^2:
        q_i\ge0 \text{ whenever } z_i=0
    \right\}.
\]
Thus it is enough to check the two boundary faces. If \(z_1=0\), then
\[
    \dot z_1=\alpha_1>0.
\]
If \(z_2=0\), then
\[
    \dot z_2=\alpha_3v\ge0,
\]
because \(v\ge0\). Hence the vector field belongs to \(T_K(z)\) at every
boundary point \(z\in K\). By Nagumo's theorem, \(K=\mathbb R^2_{\ge0}\) is
forward invariant. Therefore, for every \(z(0)\in\mathbb R^2_{\ge0}\),
\[
    z(t)\in\mathbb R^2_{\ge0}
\]
for all \(t\) in the interval of existence.

We now prove forward completeness. By local Lipschitzness of the vector field
in \(z\), for every initial condition \(z(0)\in\mathbb R^2_{\ge0}\) there
exists a unique solution on a maximal interval of existence
\[
    [0,t_{\max}),
    \qquad
    0<t_{\max}\le\infty .
\]
We show that \(t_{\max}=\infty\).

Suppose, for contradiction, that \(t_{\max}<\infty\). Since the solution remains
in \(\mathbb R^2_{\ge0}\), we have \(z_1(t),z_2(t)\ge0\) for all
\(t\in[0,t_{\max})\). Hence
\[
    \dot z_1
    =
    \alpha_1-\alpha_2z_1z_2
    \le
    \alpha_1.
\]
Therefore,
\[
    0\le z_1(t)
    \le
    z_1(0)+\alpha_1 t
    \le
    z_1(0)+\alpha_1 t_{\max}
    \qquad
    \forall t\in[0,t_{\max}).
\]
Similarly,
\[
    \dot z_2
    =
    \alpha_3v-\alpha_4z_1z_2
    \le
    \alpha_3v.
\]
Since \(v\) is locally essentially bounded and \(t_{\max}<\infty\), there
exists \(M_v>0\) such that
\[
    0\le v(t)\le M_v
    \qquad
    \text{for a.e. } t\in[0,t_{\max}).
\]
Thus
\[
    0\le z_2(t)
    \le
    z_2(0)+\alpha_3M_v t
    \le
    z_2(0)+\alpha_3M_v t_{\max}
    \qquad
    \forall t\in[0,t_{\max}).
\]
Consequently, the solution satisfies
\[
    z(t)\in \mathcal K
    \qquad
    \forall t\in[0,t_{\max}),
\]
where
\[
    \mathcal K
    :=
    \left[0,z_1(0)+\alpha_1t_{\max}\right]
    \times
    \left[0,z_2(0)+\alpha_3M_v t_{\max}\right]
\]
is compact.

Since the vector field is locally Lipschitz in \(z\) and the input \(v\) is
locally essentially bounded on \([0,t_{\max})\), the standard continuation theorem implies that the solution can be extended beyond $t_{\max}$. Therefore $t_{\max}=\infty$; see, for instance, Proposition C.3.6 in \cite{mct}.
\end{proof}

We next show that the antithetic controller has the required transient-output
attenuation property, that is: sufficiently large
positive \(v\) drives the controller output \(w=z_1\) to an arbitrarily small
neighborhood of zero.

\begin{lemma}[The PITO property of the antithetic controller]
\label{lem:aic-pito}
The antithetic controller \eqref{eq:antithetic_controller} satisfies the
PITO property from the nonnegative input \(v\) to the output \(w=z_1\).
\end{lemma}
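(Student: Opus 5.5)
The plan is to exploit the linear combination of the two controller states in which the annihilation term cancels. Define
\[
    D(t):=\alpha_4 z_1(t)-\alpha_2 z_2(t),
    \qquad
    \dot D=\alpha_1\alpha_4-\alpha_2\alpha_3 v(t).
\]
I will take the threshold independent of \(\varepsilon\), for instance
\[
    V(\varepsilon):=\frac{2\alpha_1\alpha_4}{\alpha_2\alpha_3},
\]
so that \(v(t)\ge V(\varepsilon)\) gives \(\dot D\le -c\) with \(c:=\alpha_1\alpha_4>0\). Because \(z_2(0)\ge0\), we have \(D(0)\le \alpha_4 z_1(0)\). Hence \(D(t)\le \alpha_4 z_1(0)-ct\). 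Together with \(z_1(t)\ge0\) (Lemma~\ref{lem:aic-positive-forward-complete}), this yields the lower bound
\[
    z_2(t)\ge \frac{ct-\alpha_4 z_1(0)}{\alpha_2}.
\]
This bound is increasing in \(t\) and does not involve \(z_2(0)\). That independence is exactly what lets the PITO time depend only on \(k(z(0))=z_1(0)\).

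Next I fix \(\varepsilon>0\) and set \(M:=2\alpha_1/(\alpha_2\varepsilon)\). I choose
\[
    t_1:=\frac{\alpha_2 M+\alpha_4 z_1(0)}{c},
\]
so that \(z_2(t)\ge M\) for all \(t\ge t_1\). On \([0,t_1]\) the crude estimate \(\dot z_1\le\alpha_1\) gives \(z_1(t_1)\le z_1(0)+\alpha_1 t_1\). For \(t\ge t_1\) we have \(\dot z_1\le \alpha_1-\alpha_2 M z_1\), and a comparison argument gives
\[
    z_1(t)\le \frac{\alpha_1}{\alpha_2 M}+\big(z_1(0)+\alpha_1 t_1\big)e^{-\alpha_2 M(t-t_1)}
    =\frac{\varepsilon}{2}+\big(z_1(0)+\alpha_1 t_1\big)e^{-\alpha_2 M(t-t_1)} .
\]
I then define
\[
    T(\varepsilon,r):=t_1(r)+\frac{1}{\alpha_2 M}\,\ln\!\Big(\max\Big\{1,\tfrac{2(r+\alpha_1 t_1(r))}{\varepsilon}\Big\}\Big),
\]
where \(t_1(r)=(\alpha_2M+\alpha_4 r)/c\). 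With this choice, \(z_1(t)\le\varepsilon\) for all \(t\ge T(\varepsilon,z_1(0))\). The exponential term is nonincreasing in \(t\), so the conclusion holds on the whole half-line after \(T\), not just at one instant.

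The remaining checks are routine. \(T\) is nondecreasing in its second argument, since \(t_1\) is affine increasing in \(r\) and the logarithm is increasing. Measurability and inequality issues (\(v\) is only locally essentially bounded) are handled by working with absolutely continuous solutions and a.e. differential inequalities.

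The main obstacle is the uniformity over \(z_2(0)\). A naive argument would need \(z_2\) to become large, and the time for that seems to depend on the full state. The cancellation in \(D\), together with the sign information \(z_2(0)\ge0\), is what removes this dependence. The comparison step for \(z_1\) on \([t_1,\infty)\) must use the time-varying lower bound \(z_2\ge M\) rather than an equality, which the standard scalar comparison lemma handles.
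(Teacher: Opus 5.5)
Your proposal is correct and follows essentially the same route as the paper, which uses the same cancelling combination $p=\alpha_4 z_1-\alpha_2 z_2$, the same threshold $V=2\alpha_1\alpha_4/(\alpha_2\alpha_3)$, the same $M(\varepsilon)=2\alpha_1/(\alpha_2\varepsilon)$ and $t_1$, and the same comparison bound for $z_1$ on $[t_1,\infty)$. Your transient time $T(\varepsilon,r)$ coincides with the paper's, since $1/(\alpha_2 M)=\varepsilon/(2\alpha_1)$ and putting the $\max$ inside the logarithm rather than outside is equivalent.
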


\begin{proof}
This proof will proceed as follows: first we show that large inputs $v$ will eventually drive $z_2$ to be ``large''. Then we will show that such large $z_2$ must eventually drive $z_1$ to small values, proving that it is PITO from $v$ to $z_1$.

By Lemma~\ref{lem:aic-positive-forward-complete}, $z_1(t),z_2(t)\ge0$ for all
$t\ge0$, for every initial condition $z(0)\in\mathbb R^2_{\ge0}$ and every
admissible input $v$. We construct explicit functions $V,T$ satisfying
Definition~\ref{def:PITO}.

Define
\[
    p(t):=\alpha_4z_1(t)-\alpha_2z_2(t)\le \alpha_4z_1(t).
\]
Along solutions of~\eqref{eq:antithetic_controller},
\[
    \dot p
    =
    \alpha_4\dot z_1-\alpha_2\dot z_2
    =
    \alpha_4(\alpha_1-\alpha_2z_1z_2)-\alpha_2(\alpha_3v-\alpha_4z_1z_2)
    =
    \alpha_1\alpha_4-\alpha_2\alpha_3v(t),
\]
since the bilinear terms $\alpha_2\alpha_4z_1z_2$ cancel exactly. In
particular, $\dot p$ does not depend on $z_1,z_2$ at all, only on $v$.

Fix
\[
    V_0:=\frac{2\alpha_1\alpha_4}{\alpha_2\alpha_3},
\]
and suppose $v(t)\ge V_0$ for all $t\ge0$. Integrating the expression for
$\dot p$ gives,
\[
    p(t)=p(0)+\alpha_1\alpha_4t-\alpha_2\alpha_3\int_0^tv(s)\,ds
    \le
    p(0)+\alpha_1\alpha_4t-\alpha_2\alpha_3V_0t
    =
    p(0)-\alpha_1\alpha_4t,
\]
by using $\int_0^tv(s)\,ds\ge V_0t$ and $\alpha_2\alpha_3V_0=2\alpha_1\alpha_4$.
Since $z_2(0)\ge0$, $p(0)=\alpha_4z_1(0)-\alpha_2z_2(0)\le\alpha_4z_1(0)$, so
\[
    p(t)\le\alpha_4z_1(0)-\alpha_1\alpha_4t
    \qquad
    \forall t\ge0.
\]
Rearranging $p(t)=\alpha_4z_1(t)-\alpha_2z_2(t)$ and using $z_1(t)\ge0$,
\[
    \alpha_2z_2(t)=\alpha_4z_1(t)-p(t)\ge-p(t)\ge\alpha_1\alpha_4t-\alpha_4z_1(0),
\]
so that
\begin{equation}
    \label{eq:z2bound}
    z_2(t)\ge\frac{\alpha_1\alpha_4}{\alpha_2}\,t-\frac{\alpha_4}{\alpha_2}z_1(0)
    \qquad
    \forall t\ge0.
\end{equation}
This bound holds for \emph{every} $z_2(0)\ge0$, since the actual value of
$z_2(0)$ was discarded (it can only help/tighten the inequality) precisely so that the resulting
bound depends only on $z_1(0)=w(0)$.

Next, fix $\varepsilon>0$ and set
\[
    M(\varepsilon):=\frac{2\alpha_1}{\alpha_2\varepsilon}.
\]
By \eqref{eq:z2bound}, the right-hand side reaches $M(\varepsilon)$ at
\[
    t_1:=t_1(\varepsilon,z_1(0))
    :=
    \frac{2}{\alpha_4\varepsilon}+\frac{z_1(0)}{\alpha_1},
\]
and, since the right-hand side of \eqref{eq:z2bound} is nondecreasing in $t$,
\[
    z_2(t)\ge M(\varepsilon)
    \qquad
    \forall t\ge t_1,
\]

showing that persistently large inputs will eventually drive $z_2$ to large enough values. 

Then, for $t\ge t_1$, using
$z_2(t)\ge M(\varepsilon)$ and $z_1(t)\ge0$,
\[
    \dot z_1(t)=\alpha_1-\alpha_2z_1(t)z_2(t)\le\alpha_1-\alpha_2M(\varepsilon)z_1(t).
\]

Next, since $\dot z_1\le\alpha_1$, then $z_1(t_1)\le z_1(0)+\alpha_1t_1=:Z_1$. By Gronwall's inequality, $z_1(t)\le Z(t)$ for all $t\ge t_1$, where $Z$
solves the corresponding linear equation $\dot Z=\alpha_1-\alpha_2M(\varepsilon)Z$,
$Z(t_1)=Z_1$. Explicitly, for $t\ge t_1$, write $r:=\alpha_2M(\varepsilon)$ for brevity, so that $Z$ solves
the linear initial value problem
\[
    \dot Z(t)=\alpha_1-rZ(t),
    \qquad
    Z(t_1)=Z_1 .
\]
Multiplying both sides by the integrating factor $e^{r(t-t_1)}$,
\[
    e^{r(t-t_1)}\dot Z(t)+re^{r(t-t_1)}Z(t)=\alpha_1e^{r(t-t_1)}
    \quad\Longleftrightarrow\quad
    \frac{d}{dt}\Big[e^{r(t-t_1)}Z(t)\Big]=\alpha_1e^{r(t-t_1)} .
\]
Integrating from $t_1$ to $t$,
\[
    e^{r(t-t_1)}Z(t)-Z_1
    =
    \alpha_1\int_{t_1}^te^{r(s-t_1)}\,ds
    =
    \frac{\alpha_1}{r}\Big(e^{r(t-t_1)}-1\Big) .
\]
Dividing by $e^{r(t-t_1)}$ gives the closed-form solution
\[
    Z(t)
    =
    Z_1e^{-r(t-t_1)}+\frac{\alpha_1}{r}\Big(1-e^{-r(t-t_1)}\Big)
    =
    \frac{\alpha_1}{r}+\left(Z_1-\frac{\alpha_1}{r}\right)e^{-r(t-t_1)},
    \qquad t\ge t_1 .
\]
Substituting $r=\alpha_2M(\varepsilon)$ back in recovers

\[
    Z(t)
    =
    \frac{\alpha_1}{\alpha_2M(\varepsilon)}
    +
    \left(Z_1-\frac{\alpha_1}{\alpha_2M(\varepsilon)}\right)e^{-\alpha_2M(\varepsilon)(t-t_1)}
    \le
    \frac{\alpha_1}{\alpha_2M(\varepsilon)}+Z_1e^{-\alpha_2M(\varepsilon)(t-t_1)},
\]
the last step using $\alpha_1/(\alpha_2M(\varepsilon))\ge0$. By the choice of
$M(\varepsilon)$,
\[
    \frac{\alpha_1}{\alpha_2M(\varepsilon)}=\frac\varepsilon2 ,
\]
so
\[
    z_1(t)\le\frac\varepsilon2+Z_1e^{-\alpha_2M(\varepsilon)(t-t_1)}
    \qquad
    \forall t\ge t_1 .
\]
The right-hand side is $\le\varepsilon$ once
$Z_1e^{-\alpha_2M(\varepsilon)(t-t_1)}\le\varepsilon/2$, i.e., once
\[
    t-t_1\ge\frac{1}{\alpha_2M(\varepsilon)}\ln\!\left(\frac{2Z_1}{\varepsilon}\right)
    =
    \frac{\varepsilon}{2\alpha_1}\ln\!\left(\frac{2Z_1}{\varepsilon}\right).
\]

\emph{Conclusion.} Define
\[
    V(\varepsilon):=V_0=\frac{2\alpha_1\alpha_4}{\alpha_2\alpha_3},
    \qquad
    T(\varepsilon,z_1(0))
    :=
    t_1(\varepsilon,z_1(0))
    +
    \max\left\{0,\ \frac{\varepsilon}{2\alpha_1}\ln\!\left(\frac{2Z_1}{\varepsilon}\right)\right\},
\]
with $Z_1=z_1(0)+\alpha_1t_1(\varepsilon,z_1(0))$ as above. By construction,
$v(t)\ge V(\varepsilon)$ for all $t\ge0$ implies $w(t)=z_1(t)\le\varepsilon$
for all $t\ge T(\varepsilon,z_1(0))$, for every $z(0)\in\mathbb R^2_{\ge0}$.
Both $t_1(\varepsilon,\cdot)$ and $Z_1$ are nondecreasing in $z_1(0)$, and the
logarithm is nondecreasing in $Z_1$, so $T(\varepsilon,\cdot)$ is
nondecreasing, as Definition~\ref{def:PITO} requires, and concluding the proof.
\end{proof}

From the results in Lemmas~\ref{lem:aic-positive-forward-complete} and~\ref{lem:aic-pito} we can state the following corollary of Theorem~\ref{thm:no_windup}.

\begin{corollary}[Bounded control signal for AIC interconnections]
\label{cor:aic-no-windup}
Consider the closed-loop interconnection between the plant
\eqref{eq:x-subsystem} and the antithetic integral controller
\eqref{eq:antithetic_controller}, given by
\[
    u=w=z_1,
    \qquad
    y=h(x).
\]
Assume that the closed-loop interconnection is forward complete, and that
the plant is positive and satisfies the PIPO property. Then, for every initial condition
\[
    x(0)\in\mathbb R^n_{\ge0},
    \qquad
    z(0)\in\mathbb R^2_{\ge0},
\]
the controller output
\[
    u(t)=w(t)=z_1(t)
\]
is bounded on \([0,\infty)\).

If, in addition, the plant is BIBS stable with respect to the input \(u\), then
the plant state \(x(t)\) is bounded on \([0,\infty)\).
\end{corollary}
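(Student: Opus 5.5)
The plan is to obtain Corollary~\ref{cor:aic-no-windup} by checking the hypotheses of Theorem~\ref{thm:no_windup} with the controller taken to be \eqref{eq:antithetic_controller}, and then to invoke Corollary~\ref{cor:PIPOTO_BIBS} for the second part. Forward completeness of the closed loop and the PIPO property of the plant are assumed in the statement. Positivity of the controller under nonnegative inputs is given by Lemma~\ref{lem:aic-positive-forward-complete}. The PITO property from \(v\) to \(w=z_1\) is given by Lemma~\ref{lem:aic-pito}, which also provides a transient time \(T(\varepsilon,\cdot)\) that is nondecreasing in its second argument, as Definition~\ref{def:PITO} requires.

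That leaves the one-sided affine growth bound on the controller output. Along any closed-loop trajectory, positivity gives \(z_1(t),z_2(t)\ge0\), and therefore
\[
    \dot w(t)=\dot z_1(t)=\alpha_1-\alpha_2 z_1(t)z_2(t)\le \alpha_1 .
\]
Hence the growth assumption holds with \(\omega=\alpha_1\) and any \(\overline k\ge0\). The theorem statement asks for \(\overline k>0\), so I will take, say, \(\overline k=1\). Then \(\dot w\le \alpha_1\le w+\alpha_1\), since \(w\ge0\).

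This is the only step that needs any care. It relies on the closed-loop solution staying in \(\mathbb R^2_{\ge0}\) in the \(z\)-coordinates. This follows because the closed-loop input \(v=h(x)\) is nonnegative and continuous, hence locally essentially bounded, so Lemma~\ref{lem:aic-positive-forward-complete} applies to the controller along the closed-loop trajectory. Alternatively, it follows from the assumed positivity of the plant, which keeps \(u=z_1\ge0\) admissible.

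With all hypotheses verified, Theorem~\ref{thm:no_windup} yields boundedness of \(u(t)=w(t)=z_1(t)\) on \([0,\infty)\) for every initial condition. If the plant is additionally BIBS stable, Corollary~\ref{cor:PIPOTO_BIBS} applies directly, since its hypotheses are exactly those of Theorem~\ref{thm:no_windup} together with BIBS stability, and it gives boundedness of \(x(t)\).
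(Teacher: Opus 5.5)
Your proposal is correct and is in substance the paper's argument. The paper does not cite Theorem~\ref{thm:no_windup} as a black box; instead it re-runs that theorem's first-exit/last-entry contradiction inline for the AIC, using the additive estimate $z_1(t_2)\le z_1(t_1)+\alpha_1(t_2-t_1)$ (i.e.\ $\overline k=0$, $\omega=\alpha_1$), and it handles the BIBS part directly, exactly as in Corollary~\ref{cor:PIPOTO_BIBS}. The only practical difference is the bound: your choice $\overline k=1$, made to meet the strict inequality in the theorem's statement, gives an exponential bound, whereas the paper obtains the linear bound $R=A+\alpha_1(T^x_\varepsilon+T^z_\varepsilon)$. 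The theorem's proof already covers $\overline k=0$ through the second branch of $\phi$, so you could have used that sharper choice as well.
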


\begin{proof}
By Lemma~\ref{lem:aic-positive-forward-complete}, the positive cone
\(\mathbb R^2_{\ge0}\) is forward invariant for the antithetic controller.
Hence
\[
    z_1(t)\ge0,
    \qquad
    z_2(t)\ge0,
    \qquad
    \forall t\ge0.
\]
In particular, the plant input
\[
    u(t)=z_1(t)
\]
is nonnegative for all \(t\ge0\). Since the plant is positive, its output is
nonnegative as well:
\[
    y(t)=h(x(t))\ge0.
\]

We prove boundedness of \(z_1(t)\) by contradiction. Fix
\(\varepsilon>0\). By Lemma~\ref{lem:aic-pito}, there exists
\(V_\varepsilon:=V(\varepsilon)>0\) such that the antithetic controller satisfies the
one-sided PITO property whenever
\[
    v(t)\ge V_\varepsilon.
\]
By the PIPO property of the plant, there exist
\[
    U_\varepsilon:=U(V_\varepsilon),
    \qquad
    T_\varepsilon^x:=T(V_\varepsilon),
\]
such that, for every \(\overline T\ge T_\varepsilon^x\),
\[
    u(t)\ge U_\varepsilon
    \quad \forall t\ge0
    \implies
    y(\overline T)\ge V_\varepsilon .
\]
By time invariance, the same implication holds on shifted intervals: if
\[
    u(t)\ge U_\varepsilon
    \qquad
    \forall t\in[a,b],
    \qquad
    b-a\ge T_\varepsilon^x,
\]
then
\[
    y(t)\ge V_\varepsilon
    \qquad
    \forall t\in[a+T_\varepsilon^x,b].
\]

Since \(w=z_1\), the antithetic controller satisfies the one-sided growth bound
\[
    \dot w(t)=\dot z_1(t)
    =
    \alpha_1-\alpha_2z_1(t)z_2(t)
    \le
    \alpha_1,
\]
because \(z_1(t),z_2(t)\ge0\). Therefore, for every \(t_2\ge t_1\ge0\),
\[
    w(t_2)\le w(t_1)+\alpha_1(t_2-t_1).
\]

Choose \(\eta>0\), and define
\[
    A:=\max\{U_\varepsilon,z_1(0),\varepsilon+\eta\}.
\]
Set
\[
    \delta:=A+\alpha_1T_\varepsilon^x.
\]
Let
\[
    T_\varepsilon^z:=T(\varepsilon,\delta)
\]
be the PITO time from Lemma~\ref{lem:aic-pito}. Finally, define
\[
    R:=A+\alpha_1\left(T_\varepsilon^x+T_\varepsilon^z\right).
\]

We claim that
\[
    z_1(t)\le R
    \qquad
    \forall t\ge0.
\]
Suppose, by contradiction, that this is false. Let \(\tau>0\) be the first time
such that
\[
    z_1(\tau)=R,
    \qquad
    z_1(t)\le R
    \quad
    \forall t<\tau.
\]
Since \(R>A\ge z_1(0)\), there exists a last time
\(\sigma\in[0,\tau)\) such that
\[
    z_1(\sigma)=A.
\]
By the definition of \(\sigma\),
\[
    z_1(t)\ge A
    \qquad
    \forall t\in[\sigma,\tau].
\]
Since \(A\ge U_\varepsilon\) and \(u=z_1\), it follows that
\[
    u(t)\ge U_\varepsilon
    \qquad
    \forall t\in[\sigma,\tau].
\]

Using the one-sided growth estimate, we obtain
\[
    R
    =
    z_1(\tau)
    \le
    z_1(\sigma)+\alpha_1(\tau-\sigma)
    =
    A+\alpha_1(\tau-\sigma).
\]
Therefore,
\[
    \tau-\sigma
    \ge
    T_\varepsilon^x+T_\varepsilon^z.
\]
Define
\[
    s:=\sigma+T_\varepsilon^x.
\]
Then \(s+T_\varepsilon^z\le\tau\). By the shifted PIPO implication,
\[
    y(t)\ge V_\varepsilon
    \qquad
    \forall t\in[s,s+T_\varepsilon^z].
\]

Moreover, using again the one-sided growth estimate,
\[
    z_1(s)
    \le
    z_1(\sigma)+\alpha_1T_\varepsilon^x
    =
    A+\alpha_1T_\varepsilon^x
    =
    \delta.
\]
Thus the hypotheses of Lemma~\ref{lem:aic-pito} are satisfied on the shifted
interval starting at \(s\). Hence
\[
    z_1(s+T_\varepsilon^z)\le\varepsilon.
\]
On the other hand,
\[
    s+T_\varepsilon^z\in[\sigma,\tau],
\]
and \(z_1(t)\ge A\) for every \(t\in[\sigma,\tau]\). Therefore
\[
    z_1(s+T_\varepsilon^z)\ge A>\varepsilon,
\]
which is a contradiction. Hence
\[
    z_1(t)\le R
    \qquad
    \forall t\ge0.
\]
Therefore the control signal
\[
    u(t)=w(t)=z_1(t)
\]
is bounded on \([0,\infty)\).

Finally, suppose that the plant is BIBS stable with respect to \(u\). Since
\(u(t)=z_1(t)\) is bounded and nonnegative, BIBS stability implies that the
plant state \(x(t)\) is bounded on \([0,\infty)\).
\end{proof}
Notice that Corollary~\ref{cor:aic-no-windup} should not be interpreted as boundedness of the full antithetic controller state. The result guarantees boundedness of the control signal
\[
    u(t)=w(t)=z_1(t),
\]
and, under BIBS stability of the plant, boundedness of the plant state \(x(t)\).
However, boundedness of the second antithetic species \(z_2(t)\) does not
follow directly from these facts.

Indeed, the \(z_2\)-dynamics are
\[
    \dot z_2
    =
    \alpha_3 y-\alpha_4 z_1z_2.
\]
Thus \(z_2\) is produced by the plant output \(y\) and degraded through the
bilinear annihilation term \(z_1z_2\). Even if \(z_1\) and \(x\) are bounded,
one only obtains boundedness of \(y=h(x)\). This does not by itself rule out
growth of \(z_2\), because the damping coefficient multiplying \(z_2\) is
\(\alpha_4 z_1(t)\), which may become arbitrarily small along the trajectory.
In particular, boundedness of \(z_1\) gives no positive lower bound on \(z_1\).

Therefore, additional structure is generally needed to conclude boundedness of
the full closed-loop state \((x,z_1,z_2)\). Such structure could take the form
of dissipativity, persistence, or detectability-type assumptions ensuring that
large values of \(z_2\) force sufficient decay in \(z_1\), or that the plant
output \(y\) cannot continue feeding the \(z_2\)-equation without eventually
driving the feedback loop into a regime where \(z_2\) is dissipated. For
example, in concrete positive-system models, one may prove boundedness of
\(z_2\) by exploiting dissipativity or finite-gain properties of the plant
together with the special bilinear structure of the antithetic controller. Such
model-dependent arguments go beyond the input-output PIPO/PITO mechanism used
in Corollary~\ref{cor:aic-no-windup}.

\subsection{PIPO, PITO, and Boundedness for a Nonlinear Integral-Feedback Loop}
\label{sec:nonlinear-ii-split}
For another example, we revisit the ``nonlinear II'' integral-feedback
motif of~\cite{shoval2011},
\[
    \dot \xsas=\alpha \xsas(y_0-\ysas),
    \qquad
    \dot \ysas=\beta u\xsas-\gamma \ysas,
\]
but now decompose it explicitly into a controller and a plant, interpreting the pair $(\xsas,\ysas)$ as the interconnection.
We hold the external input at a fixed constant $u\equiv\bar u>0$, treating it
as a parameter rather than as a signal to be tested for persistence; the
memory species $\xsas$ becomes the controller, and the response
species $\ysas$ becomes the plant. To avoid confusing with the
notation $x,y$ already used for a generic plant in~\eqref{eq:x-subsystem},
we write the controller and plant using the notation established in this paper, that is:
\begin{subequations}
\label{eq:nonlinear-II-split}
\begin{align}
C:\quad
&\begin{cases}
    \dot z=\alpha z(y_0-v),\\
    w=z,\\
    z(0)=z_0\in\mathbb R_{\ge0},
\end{cases}
\label{eq:nonlinear-II-controller}\\
P_{\bar u}:\quad
&\begin{cases}
    \dot x=\beta\bar u\,w-\gamma x,\\
    y=x,\\
    x(0)=x_0\in\mathbb R_{\ge0},
\end{cases}
\label{eq:nonlinear-II-plant}
\end{align}
\end{subequations}
with $\alpha,\beta,\gamma>0$ and $y_0>0$ fixed, interconnected as usual by
$u=w=z$ and $v=y=x$. Under the identification
$z\leftrightarrow \xsas$, $x\leftrightarrow \ysas$,
the closed loop of~\eqref{eq:nonlinear-II-split} is exactly the original
nonlinear II system with $u\equiv\bar u$; we use this correspondence below to
invoke the results of~\cite{shoval2011} directly.
\begin{lemma}[Positivity and forward completeness]
\label{lem:nII-split-positive-complete}
For every locally essentially bounded $v:\mathbb R_{\ge0}\to\mathbb R_{\ge0}$
and every $z_0\ge0$, the solution of~\eqref{eq:nonlinear-II-controller}
is forward complete, nonnegative, and given explicitly by
\[
    z(t)=z_0\exp\left(\alpha\int_0^t(y_0-v(s))\,ds\right)
    \qquad
    \forall t\ge0.
\]
For every locally essentially bounded $w:\mathbb R_{\ge0}\to\mathbb R_{\ge0}$
and every $x_0\ge0$, the solution of~\eqref{eq:nonlinear-II-plant} is forward
complete, nonnegative, and given explicitly by
\[
    x(t)=x_0e^{-\gamma t}+\beta\bar u\int_0^te^{-\gamma(t-s)}w(s)\,ds
    \qquad
    \forall t\ge0.
\]
\end{lemma}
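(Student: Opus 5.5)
The plan is to treat both subsystems as scalar ODEs that are linear in their state, with a time-varying coefficient or forcing term that is only locally essentially bounded. For each one I will verify directly that the displayed formula is an absolutely continuous function satisfying the ODE almost everywhere, and then invoke uniqueness of Carathéodory solutions. Both vector fields are locally Lipschitz in the state uniformly for $v$ (resp.\ $w$) in bounded sets, and measurable in $t$. Hence the verified formula is \emph{the} solution on its maximal interval. Since it is defined and finite for all $t\ge0$, the maximal interval is $[0,\infty)$, which gives forward completeness.

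\emph{Controller.} Set $\Phi(t):=\alpha\int_0^t(y_0-v(s))\,ds$. Because $v$ is locally essentially bounded, $v\in L^1_{\mathrm{loc}}$, so $\Phi$ is absolutely continuous on every compact interval with $\dot\Phi(t)=\alpha(y_0-v(t))$ for a.e.\ $t$. Then $z(t):=z_0e^{\Phi(t)}$ is absolutely continuous, and by the chain rule for absolutely continuous functions composed with the $C^1$ exponential,
\[
    \dot z(t)=z_0e^{\Phi(t)}\dot\Phi(t)=\alpha z(t)(y_0-v(t))
\]
for a.e.\ $t$, with $z(0)=z_0$. Since $\Phi(t)$ is finite for each $t$, $z(t)$ is finite for all $t\ge0$. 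Moreover $z(t)\ge0$ because $z_0\ge0$ and the exponential is positive; the case $z_0=0$ gives $z\equiv0$, consistent with uniqueness. Alternatively, one could derive the formula by separation of variables, or by noting that $\frac{d}{dt}\big(z e^{-\Phi}\big)=0$ a.e.

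\emph{Plant.} This is the standard variation-of-constants computation. Define $x(t)$ by the displayed formula, i.e.\ $x(t)=e^{-\gamma t}\big(x_0+\beta\bar u\int_0^te^{\gamma s}w(s)\,ds\big)$. The integral is absolutely continuous because $e^{\gamma s}w(s)$ is locally integrable. Differentiating the product almost everywhere gives $\dot x=-\gamma x+\beta\bar u\,w$, and $x(0)=x_0$. Nonnegativity follows immediately since every term is nonnegative: $x_0\ge0$, $\beta\bar u>0$, $w\ge0$, and the kernel $e^{-\gamma(t-s)}$ is positive. Finiteness for all $t$ again yields forward completeness. Equivalently, one may check $\frac{d}{dt}(e^{\gamma t}x)=\beta\bar u\,e^{\gamma t}w$ a.e.\ and integrate.

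\emph{Main obstacle.} The only delicate point is regularity: inputs are merely locally essentially bounded, so solutions are understood in the Carathéodory sense. I must justify that differentiation holds almost everywhere, which follows from the Lebesgue differentiation theorem for indefinite integrals. I must also justify uniqueness, which follows from the local Lipschitz bound $|\alpha(y_0-v)|\,|z-\tilde z|$ with an integrable Lipschitz constant on compact intervals, via Gronwall. Both facts are standard, for example through the same continuation and uniqueness results of \cite{mct} already used in Lemma~\ref{lem:aic-positive-forward-complete}. Everything else is explicit computation.
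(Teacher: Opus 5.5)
Your proposal is correct and follows the same route as the paper: take the explicit exponential and variation-of-constants formulas as the unique solutions, using local Lipschitzness for uniqueness, and read nonnegativity and finiteness for all $t\ge0$ (hence forward completeness) directly off the formulas. The only difference is that you spell out the Carath\'eodory details that the paper leaves implicit: absolute continuity, a.e.\ differentiation, and Gronwall uniqueness with a locally integrable Lipschitz constant.
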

\begin{proof}
The right-hand side of~\eqref{eq:nonlinear-II-controller} is linear and
homogeneous in $z$ for any fixed measurable $v(\cdot)$, so the displayed
formula is the unique solution (local Lipschitzness gives uniqueness); since
$v$ is locally essentially bounded, $\int_0^t(y_0-v(s))\,ds$ is finite for
every finite $t$, so $z(t)$ is finite and nonnegative for all $t\ge0$, with no finite escape time.
The right-hand side of~\eqref{eq:nonlinear-II-plant} is affine in $x$ for any
fixed measurable $w(\cdot)$, so the displayed variation-of-constants formula
is the unique solution; since $w$ is locally essentially bounded, the
convolution integral is finite for every finite $t$, and manifestly
nonnegative since $x_0,\bar u,\beta,w(\cdot)\ge0$. Hence $x(t)\ge0$ and finite
for all $t\ge0$, with no finite escape time.
\end{proof}
With positivity and forward completeness established, we next prove that $P$ is PIPO and $C$ is PITO.
\begin{lemma}[PIPO property of $P_{\bar u}$]
\label{lem:nII-plant-pipo}
For every fixed $\bar u>0$, the plant $P_{\bar u}$ in~\eqref{eq:nonlinear-II-plant}
satisfies the PIPO property, with
\[
    U(\kappa):=\frac{2\gamma\kappa}{\beta\bar u},
    \qquad
    T(\kappa):=\frac{\ln2}{\gamma}.
\]
\end{lemma}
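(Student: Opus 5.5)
The approach is to use the explicit variation-of-constants formula from Lemma~\ref{lem:nII-split-positive-complete} and a comparison with the constant input $U(\kappa)$. Fix $\kappa\ge0$ and $x_0\ge0$. The case $\kappa=0$ is immediate, since $U(0)=0$ and positivity gives $y(t)=x(t)\ge0$. So assume $\kappa>0$ and take an input with $w(t)\ge U(\kappa)$ for all $t\ge0$; in the interconnection this input is $u=w$.

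\emph{Step 1 (lower bound on the state).} Starting from
\[
    x(t)=x_0e^{-\gamma t}+\beta\bar u\int_0^te^{-\gamma(t-s)}w(s)\,ds,
\]
drop the nonnegative term $x_0e^{-\gamma t}$ and bound $w(s)\ge U(\kappa)$ inside the integral. This gives
\[
    x(t)\ge \beta\bar u\,U(\kappa)\int_0^te^{-\gamma(t-s)}\,ds=\frac{\beta\bar u\,U(\kappa)}{\gamma}\left(1-e^{-\gamma t}\right)=2\kappa\left(1-e^{-\gamma t}\right),
\]
using the choice $U(\kappa)=2\gamma\kappa/(\beta\bar u)$. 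The bound is uniform in $x_0$ because $x_0$ was simply discarded.

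\emph{Step 2 (transient time).} The right-hand side above is increasing in $t$. It is at least $\kappa$ exactly when $e^{-\gamma t}\le 1/2$, that is, when $t\ge \ln 2/\gamma=T(\kappa)$. Hence $y(t)=x(t)\ge\kappa$ for all $t\ge T(\kappa)$, which is the implication in Definition~\ref{def:PIPO}.

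There is no real obstacle here. The only points to check are minor: dropping the initial-condition term is legitimate because $x_0\ge0$; the integral comparison only needs $w(s)\ge U(\kappa)$ almost everywhere, which is what holds for an essentially bounded input; and the definition demands a single time $T(\kappa)$ valid for every initial condition, which the bound above provides.
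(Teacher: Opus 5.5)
Your proposal is correct and follows the paper's proof essentially step for step. Like the paper, you drop the nonnegative term $x_0e^{-\gamma t}$ from the variation-of-constants formula, bound the integral from below to get $y(t)\ge 2\kappa(1-e^{-\gamma t})$, and conclude $y(t)\ge\kappa$ for $t\ge \ln 2/\gamma$, uniformly in $x_0$.
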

\begin{proof}
Fix $\kappa\ge0$ (the case $\kappa=0$ is trivial) and $x_0\ge0$, and let $w$
satisfy $w(t)\ge U(\kappa)$ for all $t\ge0$. By
Lemma~\ref{lem:nII-split-positive-complete},
\[
    y(t)=x(t)
    =
    x_0e^{-\gamma t}+\beta\bar u\int_0^te^{-\gamma(t-s)}w(s)\,ds
    \ge
    \beta\bar uU(\kappa)\int_0^te^{-\gamma(t-s)}\,ds
    =
    \beta\bar uU(\kappa)\cdot\frac{1-e^{-\gamma t}}{\gamma},
\]
using $x_0\ge0$ and $w(s)\ge U(\kappa)$ for all $s\ge0$. Substituting
$U(\kappa)=2\gamma\kappa/(\beta\bar u)$,
\[
    y(t)\ge2\kappa\left(1-e^{-\gamma t}\right).
\]
For $t\ge T(\kappa)=\ln(2)/\gamma$, $1-e^{-\gamma t}\ge1/2$, so $y(t)\ge\kappa$.
The bound above holds uniformly over $x_0\ge0$, so the PIPO implication holds
for every initial condition, as Definition~\ref{def:PIPO} requires.
\end{proof}

Notice that Lemma~\ref{lem:nII-plant-pipo} is a concrete instance of
Lemma~\ref{lem:monotone-pipo}: $P_{\bar u}$ trivially satisfies
Assumption~\ref{ass:monotone} (it is linear, hence monotone, with $h=\mathrm{id}$
order preserving) and Assumption~\ref{ass:characteristic}, with characteristic
$k_x(\bar w)=k_y(\bar w)=\beta\bar u\bar w/\gamma\in\mathcal K_\infty$.
Lemma~\ref{lem:nII-plant-pipo} simply exhibits explicit gains for this
specific linear instance, in place of the general, non-constructive argument
of Lemma~\ref{lem:monotone-pipo}; the same computation also shows $P_{\bar u}$
is BIBS stable, since for $0\le w(t)\le M_w$ the same variation-of-constants
formula gives $x(t)\le x_0+\beta\bar uM_w/\gamma$ for all $t\ge0$.

For the ``controller'' consider:

\begin{lemma}[PITO property of $C$]
\label{lem:nII-controller-pito}
The controller $C$ in~\eqref{eq:nonlinear-II-controller} satisfies the
PITO property from $v$ to $w=z$, with
\[
    V(\varepsilon):=y_0+1,
    \qquad
    T(\varepsilon,z_0):=
    \begin{cases}
        \dfrac{1}{\alpha}\ln\!\left(\dfrac{z_0}{\varepsilon}\right), & z_0>\varepsilon,\\[2mm]
        0, & z_0\le\varepsilon.
    \end{cases}
\]
\end{lemma}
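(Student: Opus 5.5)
The argument rests on the explicit solution formula from Lemma~\ref{lem:nII-split-positive-complete},
\[
    z(t)=z_0\exp\left(\alpha\int_0^t(y_0-v(s))\,ds\right).
\]
Fix $\varepsilon>0$, an initial condition $z_0\ge0$, and an input with $v(t)\ge V(\varepsilon)=y_0+1$ for all $t\ge0$. Then $y_0-v(s)\le -1$ for almost every $s$. Integrating, $\int_0^t(y_0-v(s))\,ds\le -t$. Since $z_0\ge0$ and the exponential is monotone, this gives the uniform estimate
\[
    w(t)=z(t)\le z_0e^{-\alpha t}\qquad\forall t\ge0.
\]
This bound depends on the initial state only through $z_0=k(z(0))$. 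That is exactly the dependence Definition~\ref{def:PITO} allows, and here it is trivial because $w=z$ is scalar.

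Next we split into the two cases of the definition of $T$. If $z_0\le\varepsilon$, then $w(t)\le z_0e^{-\alpha t}\le z_0\le\varepsilon$ for every $t\ge0=T(\varepsilon,z_0)$. If $z_0>\varepsilon$, then $z_0e^{-\alpha t}\le\varepsilon$ holds if and only if $t\ge\frac1\alpha\ln(z_0/\varepsilon)$, which gives $w(t)\le\varepsilon$ for all $t\ge T(\varepsilon,z_0)$. The case $z_0=0$ is included in the first case, since then $z\equiv0$.

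It remains to check that $T(\varepsilon,\cdot)$ is nondecreasing, which is the monotonicity Definition~\ref{def:PITO} requires. On $[0,\varepsilon]$ it is constant and equal to $0$. On $(\varepsilon,\infty)$ it equals $\frac1\alpha\ln(z_0/\varepsilon)$, which is increasing and positive and tends to $0$ as $z_0\downarrow\varepsilon$. So $T(\varepsilon,\cdot)$ is continuous and nondecreasing on $\mathbb R_{\ge0}$. Also $V$ and $T$ take values in $\mathbb R_{\ge0}$, as required.

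No step here is a real obstacle; the only point needing care is that the hypothesis $v\ge V(\varepsilon)$ is used only almost everywhere inside the integral, which is enough because the integral is insensitive to null sets. For completeness, we may also remark (for later use with Theorem~\ref{thm:no_windup}) that the closed-loop growth bound is multiplicative: since $v\ge0$ and $z\ge0$, we have $\dot w=\alpha z(y_0-v)\le\alpha y_0\,w$. This is the affine bound with $\overline k=\alpha y_0$ and $\omega=0$, and it is valid for any $\omega>0$ as well.
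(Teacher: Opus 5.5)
Your proposal is correct and follows the paper's proof essentially step for step. Both use the explicit solution formula to get $z(t)\le z_0e^{-\alpha t}$ under $v\ge y_0+1$, split into the cases $z_0\le\varepsilon$ and $z_0>\varepsilon$, and check that $T(\varepsilon,\cdot)$ is nondecreasing, while your extra remarks (the almost-everywhere use of the input bound, continuity of $T$, and the growth bound with $\overline k=\alpha y_0$, $\omega=0$) are accurate details the paper leaves implicit or treats right after the lemma.
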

\begin{proof}
Fix $\varepsilon>0$ and $z_0\ge0$, and let $v$ satisfy $v(t)\ge V(\varepsilon)=y_0+1$ for all $t\ge0$.
By Lemma~\ref{lem:nII-split-positive-complete},
\[
    z(t)
    =
    z_0\exp\left(\alpha\int_0^t(y_0-v(s))\,ds\right)
    \le
    z_0\exp\left(\alpha\int_0^t(y_0-V(\varepsilon))\,ds\right)
    =
    z_0e^{-\alpha t},
\]
by using $y_0-v(s)\le y_0-V(\varepsilon)=-1$ for all $s\ge0$. If $z_0\le\varepsilon$,
then $z(t)\le z_0\le\varepsilon$ for all $t\ge0$, and $T(\varepsilon,z_0)=0$
works. If $z_0>\varepsilon$, then $z(t)\le\varepsilon$ once
$z_0e^{-\alpha t}\le\varepsilon$, i.e., once
$t\ge\frac1\alpha\ln(z_0/\varepsilon)=T(\varepsilon,z_0)$. In either
case, $w(t)=z(t)\le\varepsilon$ for all $t\ge T(\varepsilon,z_0)$, and
$T(\varepsilon,\cdot)$ is (weakly) increasing in $z_0$, as
Definition~\ref{def:PITO} requires.
\end{proof}

Notice that, unlike the antithetic controller, whose output satisfies the purely constant
bound $\dot w\le\alpha_1$, the controller $C$ satisfies
\[
    \dot z=\alpha z(y_0-v)\le\alpha y_0\,z
    \qquad
    \forall t\ge0,
\]
since $v\ge0$. This is exactly the one-sided affine growth bound
$\dot w(t)\le\overline kw(t)+\omega$ assumed for Theorem~\ref{thm:no_windup}, with
$\overline k=\alpha y_0$ and $\omega=0$. 

With this, we are ready to look at the interconnection:

\begin{corollary}[Bounded control signal for the nonlinear-II loop]
\label{cor:nII-split-bounded}
Fix $\bar u>0$. For every initial condition $z_0\ge0$, $x_0\ge0$, the
control signal $u(t)=w(t)=z(t)$ in the closed-loop interconnection of
$C$~\eqref{eq:nonlinear-II-controller} and
$P_{\bar u}$~\eqref{eq:nonlinear-II-plant} is bounded on $[0,\infty)$.
Moreover, since $P_{\bar u}$ is BIBS stable, the plant state $x(t)$ is also bounded on
$[0,\infty)$.
\end{corollary}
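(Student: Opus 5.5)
The plan is to apply Theorem~\ref{thm:no_windup} directly to the interconnection of $C$~\eqref{eq:nonlinear-II-controller} and $P_{\bar u}$~\eqref{eq:nonlinear-II-plant}, and then Corollary~\ref{cor:PIPOTO_BIBS} for the plant state. This amounts to verifying the four hypotheses of the theorem one by one. PIPO of the plant is Lemma~\ref{lem:nII-plant-pipo}, and PITO of the controller is Lemma~\ref{lem:nII-controller-pito}. The one-sided growth bound was observed just before the statement: $\dot w=\alpha z(y_0-v)\le \alpha y_0 w$, using $v=x\ge0$ and $z\ge0$.

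This growth bound gives $\overline k=\alpha y_0$ and $\omega=0$. The theorem is phrased with $\overline k,\omega>0$, but this is harmless: $\dot w\le \overline k w\le \overline k w+1$, so one may take $\omega=1$. Alternatively, one can note that the proof of Theorem~\ref{thm:no_windup} only uses $\omega\ge0$.

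The remaining hypothesis, and the one genuine step, is forward completeness and positivity of the closed loop. Lemma~\ref{lem:nII-split-positive-complete} covers each subsystem under exogenous inputs, but the closed loop needs a separate argument. I would argue directly on the closed-loop ODE $\dot z=\alpha z(y_0-x)$, $\dot x=\beta\bar u z-\gamma x$, whose vector field is locally Lipschitz (polynomial). The faces $z=0$ and $x=0$ give $\dot z=0$ and $\dot x=\beta\bar u z\ge0$, so Nagumo's criterion (as in Lemma~\ref{lem:aic-positive-forward-complete}) yields invariance of $\mathbb R^2_{\ge0}$.

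For completeness, on a maximal interval $[0,t_{\max})$ with $t_{\max}<\infty$, I would use $x\ge0$ to get $\dot z\le\alpha y_0 z$, hence $z(t)\le z_0e^{\alpha y_0 t_{\max}}$. Then the variation-of-constants bound gives $x(t)\le x_0+\beta\bar u z_0e^{\alpha y_0t_{\max}}t_{\max}$. The trajectory thus stays in a compact set, contradicting finite escape, as in the proof of Lemma~\ref{lem:aic-positive-forward-complete}. Alternatively, forward completeness on the orthant follows from the boundedness results of~\cite{shoval2011} via the stated identification, but the direct argument is self-contained.

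With all hypotheses verified, Theorem~\ref{thm:no_windup} gives boundedness of $u=w=z$. Since $P_{\bar u}$ is BIBS stable (the remark after Lemma~\ref{lem:nII-plant-pipo}: $x(t)\le x_0+\beta\bar u M_w/\gamma$), Corollary~\ref{cor:PIPOTO_BIBS} gives boundedness of $x$. Here the controller state is just $z=w$, so in this example the full closed-loop state is bounded. I expect no real obstacle; the only points requiring care are the closed-loop forward completeness and the $\omega>0$ technicality.
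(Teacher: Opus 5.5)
Your proposal is correct and has the same overall structure as the paper's proof. You verify the hypotheses of Theorem~\ref{thm:no_windup} using Lemmas~\ref{lem:nII-plant-pipo} and~\ref{lem:nII-controller-pito} and the growth bound $\dot w\le\alpha y_0 w$, and then use BIBS stability of $P_{\bar u}$ to bound the plant state.

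The one real difference is how you obtain closed-loop forward completeness. The paper cites the proper Lyapunov function from Lemma~5.1 of~\cite{shoval2011}. You argue directly: Nagumo's criterion gives invariance of the orthant, and then the finite-horizon estimates $z(t)\le z_0e^{\alpha y_0 t_{\max}}$ and $x(t)\le x_0+\beta\bar u z_0 e^{\alpha y_0 t_{\max}}t_{\max}$ rule out finite escape.

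Your route buys three things:
\begin{itemize}
\item It is self-contained.
\item It explicitly establishes positivity of the closed loop, which the proof of Theorem~\ref{thm:no_windup} only asserts.
\item Because it yields only finite-time bounds, the boundedness conclusion is genuinely delivered by Theorem~\ref{thm:no_windup}.
\end{itemize}

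In the paper's version, the claim that trajectories stay in a compact sublevel set for all forward time already gives boundedness of $(z,x)$ on its own. The PIPO/PITO step is therefore logically redundant there, which matches the paper's own remark that the conclusion was already implied by~\cite{shoval2011}.

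You also correctly flag that Theorem~\ref{thm:no_windup} is stated with $\omega>0$ while this example has $\omega=0$; the paper passes over this mismatch silently. Either of your fixes works: replacing $\omega$ by $1$, or noting that the theorem's proof only uses $\omega\ge0$.
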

\begin{proof}
By Lemma~\ref{lem:nII-split-positive-complete}, both subsystems are positive
and forward complete for arbitrary admissible inputs; the closed-loop
interconnection itself is forward complete because the Lyapunov function
constructed in the proof of Lemma~5.1 in~\cite{shoval2011} is proper, so
trajectories remain in a compact sublevel set for all forward time. By
Lemma~\ref{lem:nII-plant-pipo}, $P_{\bar u}$ satisfies the PIPO property. By
Lemma~\ref{lem:nII-controller-pito}, $C$ satisfies the PITO property. We have also established that the controller output satisfies the
one-sided affine growth bound of Theorem~\ref{thm:no_windup} with
$\overline k=\alpha y_0$, $\omega=0$. All hypotheses of
Theorem~\ref{thm:no_windup} are therefore satisfied, and the control signal
$u(t)=w(t)=z(t)$ is bounded on $[0,\infty)$.

Since $w=z$ is bounded, and $P_{\bar u}$ is BIBS stable with respect to $w$,
the plant state $x(t)$ is bounded on $[0,\infty)$ as well.
\end{proof}

Note that Corollary~\ref{cor:nII-split-bounded} recovers, through the PIPO/PITO framework, a conclusion already implied by Corollary~5.2
of~\cite{shoval2011} (global asymptotic stability of the nonlinear II system
for every fixed $\bar u>0$). The value of the argument above is, therefore, as a worked example of
Theorem~\ref{thm:no_windup} under a genuinely multiplicative growth bound
($\overline k=\alpha y_0>0$, $\omega=0$), complementing the purely additive
bound ($\overline k=0$, $\omega=\alpha_1$) illustrated by the antithetic
controller.

\section{Conclusions}
\label{sec:conclusion}

In this paper we studied the interconnection of two positive nonlinear control systems in a classical plant/controller feedback loop, and asked when the resulting closed-loop control signal is guaranteed to remain bounded. This is an input-output question in the spirit of anti-windup analysis, but posed
without linearity, without a specific controller architecture in mind, and without assuming stability of either subsystem in isolation.

To address this problem, we introduced two complementary input-output properties. The persistent-input/persistent-output (PIPO) property, required of the plant, formalizes the idea that a persistently large input eventually forces a persistently large output, uniformly over initial conditions. The persistent-input/transient-output (PITO) property, required of the controller, formalizes the complementary idea that a persistently large input eventually attenuates the controller's own output back down to a proportionally small level. Theorem~\ref{thm:no_windup} shows that these two properties (together with mild regularity assumptions on the systems) are jointly sufficient to guarantee boundedness of the control signal, regardless of initial condition. Corollary~\ref{cor:PIPOTO_BIBS} strengthens this to boundedness of the full plant state whenever the plant is additionally BIBS stable with respect to the control input. Because both properties are stated purely in terms of input-output behavior, the result applies uniformly across very different system architectures, without
requiring a common Lyapunov function or a shared notion of equilibrium for the interconnection.

We then showed that both properties are satisfied by broad and practically relevant classes of systems. On the plant side, Lemma~\ref{lem:monotone-pipo} shows that any positive stable monotone system whose input-output characteristic is of class $\mathcal K_\infty$ is automatically both PIPO and BIBS stable, so persistent amplification follows from a mild monotonicity and steady-state growth condition rather than from an ad hoc construction. On the controller side, we studied two examples from synthetic and systems biology. For the antithetic integral controller, after establishing its positivity and forward completeness (Lemma~\ref{lem:aic-positive-forward-complete}), we proved directly that it satisfies the PITO property (Lemma~\ref{lem:aic-pito}), and used this to conclude boundedness of the control signal for any positive PIPO plant it is connected to (Corollary~\ref{cor:aic-no-windup}). For the nonlinear-II integral-feedback motif of~\cite{shoval2011} (Lemmas~\ref{lem:nII-split-positive-complete}--\ref{lem:nII-controller-pito}), we decomposed the system into a log-linear controller and a linear plant. We showed that the controller and plant each satisfy the PIPO and PITO properties respectively, and with both gains available in closed form enabling us to certify boundedness of the states in this case (Corollary~\ref{cor:nII-split-bounded}). Together, these results show that the PIPO/PITO framework is not merely an abstract input-output pairing but one that can be certified constructively, with explicit gain functions, for systems of genuine interest in the literature.

Taken together, these results make the case for the PIPO/PITO framework as a general-purpose, input-output criterion for ruling out windup in positive interconnections. Its strength lies in reducing a question about the coupled nonlinear closed-loop dynamics to two comparatively simple properties, verified separately on the plant and the controller, through explicit and computable gain functions rather than a shared Lyapunov function or a common notion of equilibrium. 
\bibliographystyle{plain}
\bibliography{references}

@string{TAC="IEEE Trans.\ Automat.\ Control"}

@string{SIAM="SIAM J.\ Control Optim."}

@string{IEEE="Proc. IEEE"}

@string{s="Submitted"}

@article{Huang2018,
  author  = {Huang, H-H and Qian, Y and Del Vecchio, D},
  title   = {A quasi-integral controller for adaptation of genetic modules to variable ribosome demand},
  journal = {Nature Communications},
  volume  = {9},
  number  = {1},
  pages   = {5415},
  year    = {2018},
  doi     = {10.1038/s41467-018-07899-z},
  url     = {https://doi.org/10.1038/s41467-018-07899-z}
}

@book {mct,
    AUTHOR = {Sontag, E.D.},
 TITLE = {Mathematical {C}ontrol {T}heory. {D}eterministic {F}inite-{D}imensional {S}ystems},
    SERIES = {Texts in Applied Mathematics},
    VOLUME = {6},
   EDITION = {Second},
 PUBLISHER = {Springer-Verlag},
   ADDRESS = {New York},
      YEAR = {1998},
     PAGES = {xvi+531},
      ISBN = {0-387-98489-5},
      }

@INPROCEEDINGS{Margaliot-CDC,
  author={Margaliot, Michael and Wu, Chengshuai and Sontag, Eduardo D.},
  booktitle={2025 IEEE 64th Conference on Decision and Control (CDC)}, 
  title={Compact attractors of an antithetic integral feedback system have a simple structure}, 
  year={2025},
  volume={},
  number={},
  pages={2880-2885},
  doi={10.1109/CDC57313.2025.11312315}}

@article{rev_internal_model_2022,
   author = "Bin, Michelangelo and Huang, Jie and Isidori, Alberto and Marconi, Lorenzo and Mischiati, Matteo and Sontag, Eduardo",
   title = "Internal Models in Control, Bioengineering, and Neuroscience", 
   journal= "Annual Review of Control, Robotics, and Autonomous Systems",
   year = "2022",
   volume = "5",
   pages = "55-79",
  }

@article{AFC2016,
author={Briat, C. and  Gupta, A.  and  Khammash, M.}, title={Antithetic integral feedback ensures robust
perfect adaptation in noisy biomolecular networks},
journal={Cell Syst.},
volume={2}, pages={15-26}, year={2016},
}

@article{Khammash2019,
author={Stephanie K. Aoki and  Gabriele Lillacci and  Ankit Gupta and  Armin Baumschlager and  David Schweingruber and  Mustafa Khammash},
title={A universal biomolecular integral feedback controller for robust perfect adaptation},
journal={Nature}, volume={570}, pages={533-537},
year={2019},
}

@INPROCEEDINGS{AngeliS2004,
  author =      { D. Angeli and E. D. Sontag.},
  title =        {Interconnections of monotone systems with
steady-state characteristics},
  booktitle =    {Optimal control, stabilization and nonsmooth analysis},
  year =        {2004},
  editor=      {M. S. \mbox{de Queiroz} and M. A. Malisoff and P.
R. Wolenski},
  volume =      {301},
  series =      {Lecture Notes in Control and Inform. Sci.},
  pages =        {135-154},
  address =      {Berlin},
  publisher =    {Springer},
}

@article{mcs_angeli_2003,
title={Monotone control systems},
journal=TAC,
year={2003},
author={D. Angeli and E. D. Sontag},
volume={48},
pages={1684-1698},
}

@article{corentin_2020,
author = {Briat, Corentin},
title = {A Biology-Inspired Approach to the Positive Integral Control of Positive Systems: The Antithetic, Exponential, and Logistic Integral Controllers},
journal = {SIAM J. Applied Dynamical Systems},
volume = {19},
number = {1},
pages = {619-664},
year = {2020}
}

@article{Eyal_k_posi,
	title={A Generalization of Linear Positive Systems with Applications to Nonlinear Systems: Invariant Sets and the {Poincar\'{e}-Bendixson} Property},
	author={E. Weiss and M. Margaliot},
	year={2021}, volume = {123},
    pages = {109358},
	journal={Automatica},  
}

@inproceedings{agarwal_cdc2019,
  title={Some remarks on robust gene regulation in a biomolecular integral controller},
  author={D. K. Agrawal and R. Marshall and M. Ali Al-Radhawi and V. Noireaux and E. D. Sontag},
 booktitle = {Proc. 2019 IEEE Conf. Decision and Control},address={Nice,  France},
 year = {2019},
 pages = {2820-2825},
}

@article{agarwal2019naturecom,
  author = {D. K. Agrawal and R. Marshall and V. Noireaux and E. D. Sontag},
  title = {In vitro implementation of robust gene regulation in a synthetic biomolecular integral controller},
  journal={Nature Communications},
  year={2019},volume={10},pages={5760}, 
}

@INPROCEEDINGS{18cdc_tutorial_imp,
  title={Internal models in control, biology and neuroscience},
  author={J. Huang and A. Isidori and L. Marconi and M. Mischiati and E. D. Sontag and W. M. Wonham},
 booktitle = {Proc. 2018 IEEE Conf. Decision and Control},
 year = {2018},
 pages = {5370-5390},
}

@article{2019biorxiv_margaliot_sontag,
	author = {Margaliot, M. and Sontag, E.D.},
	title = {Compact attractors of an antithetic integral feedback system have a simple structure (preprint version)},
	note = {DOI: 10.1101/868000},
	elocation-id = {868000},
	year = {2019},
	doi = {10.1101/868000},
	publisher = {Cold Spring Harbor Laboratory},
	URL = {https://www.biorxiv.org/content/early/2019/12/08/868000},
	eprint = {https://www.biorxiv.org/content/early/2019/12/08/868000.full.pdf},
	journal = {bioRxiv}
}

@article{katz2025instability,
author  = {Katz, Rami and Giordano, Giulia and Margaliot, Michael},
title   = {Instability of equilibrium and convergence to periodic orbits in strongly 2-cooperative systems},
journal = {Journal of Differential Equations},
volume  = {444},
pages   = {113651},
year    = {2025},
doi     = {10.1016/j.jde.2025.113651},
url     = {https://doi.org/10.1016/j.jde.2025.113651}
}

@article{shoval2011,
  title={Symmetry invariance for adapting biological systems},
  author={Shoval, Oren and Alon, Uri and Sontag, Eduardo},
  journal={SIAM journal on applied dynamical systems},
  volume={10},
  number={3},
  pages={857--886},
  year={2011},
  publisher={SIAM}
}

@article{galeani2009tutorial,
  title={A tutorial on modern anti-windup design},
  author={Galeani, Sergio and Tarbouriech, Sophie and Turner, Matthew and Zaccarian, Luca},
  journal={European Journal of Control},
  volume={15},
  number={3-4},
  pages={418--440},
  year={2009},
  publisher={Elsevier}
}

@article{wafi-boundedness-antithetic,
  title={Boundedness of solutions in feedback systems with antithetic controllers},
  author={Wafi, Moh Kamalul and de Oliveira, Arthur CB and Sontag, Eduardo D},
  journal={arXiv preprint arXiv:2604.27290},
  year={2026}
}

\end{document}